\newenvironment{notation}{\par\noindent\textbf{Notation:} }{\vspace{1ex}\par}
\numberwithin{equation}{section}
 \newcommand{\E}{\mathbb{E}}
 \newcommand{\N}{\mathbb{N}}
 \newcommand{\Real}{\mathbb{R}}
 \newcommand{\norm}[1]{\left\Vert#1\right\Vert}
 \newcommand{\warunk}[1]{{\hat #1}}
 \newcommand{\przybl}[1]{{#1^\ast}}
\journalname{journal}
\begin{document}

\title{Approximations of Bond and Swaption Prices in a~Black-Karasi\'{n}ski Model}
\titlerunning{Approximations of Bond and Swaption Prices in a~Black-Karasi\'{n}ski Model}        
\author{A. Daniluk, R. Muchorski}
\institute{A. Daniluk, Jagiellonian Univeristy, \email{andrzej.daniluk@gmail.com} \\
R. Muchorski, TUiR Allianz Polska, \email{rafal.muchorski@gmail.com}}
\date{\textbf{May 31, 2015}}
\maketitle

\begin{abstract}
We derive semi-analytic approximation formulae for bond and swaption prices in a Black-Karasi\'{n}ski interest rate model. Approximations are obtained using a novel technique based on the Karhunen-Lo\`{e}ve expansion. Formulas are easily computable and prove to be very accurate in numerical tests. This makes them useful for numerically efficient calibration of the model. 

\keywords{Black-Karasi\'{n}ski Model \and Karhunen-Lo\`{e}ve expansion}
\subclass{91G30 \and 60H30 \and 41A99}
\end{abstract}

\section*{Acknowledgement}
Preprint of an article submitted for consideration in International Journal of Theoretical and Applied Finance \copyright\ 2015, copyright World Scientific Publishing Company, URL: http://www.worldscientific.com/worldscinet/ijtaf

\section{Introduction}\label{intro}

Short-rate models are of fundamental importance in the quantitative field of finance, as they provide a comprehensive mathematical framework for pricing interest rate or credit derivatives \cite{piterbarg},\cite{okane}. The diversity of model structures and assumptions, enable us to choose the most appropriate approach when dealing with specific pricing issues. The basic Gaussian affine models, such as Vasicek \cite{vasicek} and Hull and White \cite{hull-white} gained interest among practitioners due to their analytical tractability and transparency, with closed-form pricing formulas available on hand. There is however, a trade-off between such advantages and implausible model forecasts, which allow negative interest rates. Some others, such as the Cox, Ingresoll and Ross model \cite{cir}, despite having the property of positive rates, often provide unrealistic outcomes by imposing a lower positive bound for the par swap rate \cite{piterbarg}.
Non-negativity of interest rates does not seem so important, or may even be undesirable in today's low-interest rate environment. However, in the context of default intensity modeling, negative hazard rates are generally not feasible due to lack of consistency with arbitrage-free assumptions.

A model developed by F. Black and P. Karasi\'{n}ski (BK) in 1991, also known as the "exponential Vasicek model" \cite{bk}, overcomes the problem of negative rates. It postulates log-normality of short rates, motivated by the fact that the market standard Black formulae for caps and swaptions are based on log-normal distributions of relevant rates. Moreover, it possesses rather good fit-to-data properties, especially concerning the swaption volatility surface. Unfortunately, in this model, exact analytical formulae for swaptions, or even for zero-coupon bond prices, do not exist. This lack of analytical tractability requires the use of computationally intensive and time-absorbing numerical methods (PDE or Monte Carlo). This virtually precludes efficient model calibration and seriously narrows areas of potential model applicationss.

In response to challenges related to implementation of the BK model, there have been several attempts to obtain reliable analytical approximations of zero-coupon bond or swaption prices. In particular, Tourruc\^{o}o {\it et al.} \cite{tourrucoo} proposed approximate formulas for zero-coupon bonds in a one-factor model, derived in the limit of small volatility by applying the regular asymptotic expansion of a transformed bond PDE \cite{piterbarg}. Antonov
and Spector \cite{antonov} went further and came up with a generalised multi-factor BK model. By performing a regular asymptotic expansion of the PDE, they provide approximations for both zero-coupon bond and European swaption prices. Nevertheless, both approaches consider small volatility cases, which are not plausible in many financial applications, in particular in credit markets, where lognormal volatilities of default intensity as high as 100\% are frequently observed \cite{okane}.
A different approach towards deriving approximations has been adopted by Stehlikova \cite{stehlikova}, who developed small time expansions for one factor models. Prices of zero-coupon bonds were represented by means of a Taylor series expansion with coefficients represented in a closed form, obtained via specific recurrent relations. Another approximation concept, which originated from chemical physics under the name of "the exponent expansion", was introduced to Finance by Capriotti \cite{capriotti1} and applied to the calculation of transition probabilities and Arrow-Debreu prices, for several diffusion processes. This approach appeared to provide very accurate approximations and was further pursued by Stehlikova, Capriotti \cite{capriotti2} in the context of the BK model. On the basis of the exponent expansion, the authors proposed representing the price of a zero-coupon bond in the form of a power series in time, that can easily be computed by means of a recursion involving only simple one-dimensional integrals. For larger time horizons, the exponent expansion can be combined with a fast numerical convolution to obtain more accurate results. However, these approximation formulas are quite complex and there is generally no guidance on the selection of the proper truncation order of the exponent expansion, to obtain a sufficient degree of approximation accuracy.

In this paper, we propose a totally different approach towards approximating zero-coupon bond and swaption prices. The concept is based on a novel technique, applying the Karhunen-Lo\`{e}ve representation
\cite{loeve} of the Ornstein-Uhlenbeck process and one other related process, that appear in the context of a BK model. Initially, we provide semi-analytic approximations of zero-coupon bond prices. Afterwards, we derive analogous approximations for swaptions using a conceptually similar, yet more elaborate approach. The formulae are easy to implement, computationally fast and provide very accurate approximations for the vast majority of parameter settings. As zero-coupon bonds and swaptions (caps/floors as a special case) are basic instruments used for calibration of the model, our results can be used directly for that purpose, substantially improving the speed of calibration.

The structure of this paper is organized as follows. In the second section we introduce the mathematical framework established for deriving price approximations in a BK model. The third section provides two kinds of approximations of a zero-coupon bond, with all related derivations included. The fourth section is composed in a similar manner and deals with swaption pricing. The final fifth section contains numerical results of approximations for bonds and swaptions for several selected sets of model parameters, as well as comparisons with alternative approximations for bond prices based on papers 
\cite{tourrucoo},\cite{capriotti2}. For the readers convenience, most technical and purely mathematical considerations are gathered in the Appendix.

\section{Mathematical preliminaries}

We denote by $(r_t)_{t\geq 0}$ the process of the short rate and by $l_t\coloneqq\ln(r_t)$ it's natural logarithm. We assume that the process $(l_t)_{t\geq 0}$ follows dynamics postulated in a Black-Karasi\'{n}ski model \cite{bk} i.e.
\begin{equation}\label{eq:123}
d l_t=\big(a(t)-b\,l_t\big)\,dt+\sigma\,dW_t,
\end{equation}
where $\sigma, b$ are positive constants, $a(t)$ is some deterministic function of time and $(W_t)_{t\geq 0}$ is a Wiener process under the spot measure. Let us introduce some notations and recall a few facts.\\

\begin{notation}
For $u\geq 0$, $v\geq 0$ we set\\
\begin{equation}	
A(u,u+v) \coloneqq \int_u^{u+v}e^{-b(u+v-s)}a(s)ds,
\end{equation}
\begin{equation}
\bar{r}_{u,v} \coloneqq r_u^{e^{-bv}}\exp(A(u,u+v)),
\end{equation}
\begin{equation}
\quad X^{(u)}_t \coloneqq \frac{1}{\sigma}\big(l_{u+t}-e^{-bt}l_u-A(u,u+t)\big).
\end{equation}
In the specific case $u=0$ we denote $X_t \coloneqq X^{(0)}_t$.
\end{notation}

\begin{proposition}\label{prop1}
For any $u,t\geq 0$
\begin{equation}
r_{u+t}=\bar{r}_{u,t}\exp\big(\sigma X^{(u)}_t\big),
\end{equation}
where $\big(X^{(u)}_t\big)_{t\geq 0}$ is an Ornstein-Uhlenbeck process satisfying
\begin{equation}\label{eq:58}
X^{(u)}_0=0, \quad dX^{(u)}_t=-bX^{(u)}_t\,dt+dW^{(u)}_t,
\end{equation}
where $W^{(u)}_t \coloneqq W_{u+t}-W_u$ is a Wiener process.
\end{proposition}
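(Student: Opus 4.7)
The proposition is essentially a direct algebraic/differential unpacking of the definitions together with the BK SDE, so my plan is to verify the two assertions separately and then comment on the Brownian-motion claim.

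First I would prove the identity $r_{u+t}=\bar r_{u,t}\exp(\sigma X^{(u)}_t)$ by pure algebra. Rearranging the definition of $X^{(u)}_t$ gives
\begin{equation*}
l_{u+t}=e^{-bt}l_u+A(u,u+t)+\sigma X^{(u)}_t.
\end{equation*}
Exponentiating and using $r_{u+t}=e^{l_{u+t}}$, $r_u=e^{l_u}$, I get
\begin{equation*}
r_{u+t}=r_u^{e^{-bt}}\exp\!\bigl(A(u,u+t)\bigr)\exp(\sigma X^{(u)}_t),
\end{equation*}
which is exactly $\bar r_{u,t}\exp(\sigma X^{(u)}_t)$ by the definition of $\bar r_{u,v}$. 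The initial condition $X^{(u)}_0=0$ is immediate, since $A(u,u)=0$ and $e^{0}=1$.

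Next I would derive the SDE. Applying the stochastic differential with respect to $t$ to $\sigma X^{(u)}_t=l_{u+t}-e^{-bt}l_u-A(u,u+t)$, I use the BK dynamics $dl_{u+t}=(a(u+t)-b\,l_{u+t})\,dt+\sigma\,dW_{u+t}$, together with the elementary computation
\begin{equation*}
\frac{\partial}{\partial t}A(u,u+t)=a(u+t)-b\,A(u,u+t),
\end{equation*}
obtained by Leibniz's rule on $A(u,u+t)=\int_u^{u+t}e^{-b(u+t-s)}a(s)\,ds$. Collecting the $dt$-terms gives
\begin{equation*}
\sigma\,dX^{(u)}_t=-b\bigl[l_{u+t}-e^{-bt}l_u-A(u,u+t)\bigr]\,dt+\sigma\,dW_{u+t},
\end{equation*}
and the bracket is precisely $\sigma X^{(u)}_t$, yielding $dX^{(u)}_t=-bX^{(u)}_t\,dt+dW^{(u)}_t$ after dividing by $\sigma$.

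Finally, I would note that $W^{(u)}_t=W_{u+t}-W_u$ is a Brownian motion by the stationary-increments / strong Markov property of $(W_t)$; this is standard and requires no new argument. There is no real obstacle here; the only point of mild care is getting the sign of the drift correct when differentiating $A(u,u+t)$ in its second argument, after which the cancellation of the $a(u+t)$ terms is what produces the clean OU drift.
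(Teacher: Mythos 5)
Your proof is correct and follows essentially the same route as the paper: the paper's own proof simply says that the SDE for $X^{(u)}_t$ follows from the BK dynamics "by elementary calculus" and then reads off the identity $r_{u+t}=\bar r_{u,t}\exp(\sigma X^{(u)}_t)$ from the definitions, which is exactly what you carry out (just with the Leibniz computation for $A(u,u+t)$ and the drift cancellation written explicitly).
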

\begin{proof}
Using \eqref{eq:123} and elementary calculus we check that $\big(X^{(u)}_t\big)_{t\geq 0}$ is an Ornstein-Uhlenbeck process satisfying \eqref{eq:58}. By virtue of introduced notations we have 
\begin{equation}
r_{u+t}=\exp\left(A(u,u+t)+e^{-bt}\ln(r_u)+\sigma X^{(u)}_t\right)=\bar{r}_{u,t}\exp\big(\sigma X^{(u)}_t\big).
\end{equation}
\end{proof}\qed

\begin{remark}
Processes $\big(W^{(u)}_t\big)_{t\geq 0}$ and $\big(X^{(u)}_t\big)_{t\geq 0}$ are independent of $\mathcal{F}_u$. 
\end{remark}

\begin{remark}
Quantity $\bar{r}_{u,v}$ can be interpreted as the value of the short term rate $r_{u+v}$ in the absence of volatility. 
It is also the dominant of the distribution of $r_{u+v}$ under $\mathcal{F}_u$. 
\end{remark}

\begin{remark}
Ornstein-Uhlenbeck process $(X_t)_{t\geq 0}$  is a centred Gaussian process with a covariance function \cite{karatzas-shreve}
\begin{equation}\label{eq:59}
K(s,t) \coloneqq \frac{1}{2b}e^{-b|t-s|}-\frac{1}{2b}e^{-b(t+s)}
\end{equation}
and a variance function
\begin{equation}
V(s) \coloneqq K(s,s)=\frac{1}{2b}\left(1-e^{-2bs}\right)
\end{equation}
\end{remark}
  
For further purpose we introduce yet another process

\begin{notation}
Let us define an Ornstein-Uhlenbeck bridge process $\big(\warunk{X}_t\big)_{t\in[0,T]}$ as
\begin{equation}\label{eq:102}
\warunk{X}_s \coloneqq X_s-\frac{K(s,T)}{V(T)}X_T, \quad s\in[0,T].
\end{equation}
\end{notation}

\begin{remark}\label{eq:247}
Ornstein-Uhlenbeck bridge $\big(\warunk{X}_t\big)_{t\in[0,T]}$ is a centred Gaussian process satisfying $\warunk{X}_0=\warunk{X}_T=0$, with a covariance function 
\begin{equation}\label{eq:91}
\warunk{K}(s,t) \coloneqq K(s,t)-\frac{K(s,T)K(t,T)}{V(T)}
\end{equation}
and a variance function
\begin{equation}
\warunk{V}(s) \coloneqq \warunk{K}(s,s)=V(s)\Bigg(1-\frac{e^{-2b(T-s)}}{1-e^{-2bT}}\Bigg).
\end{equation}

Moreover for any $s\in[0,T]$ a random vector $(X_T,\warunk{X}_s)$ has a joint Gaussian distribution and by definition $\E(X_T\warunk{X}_s)=0$, hence $\warunk{X}_s$ is independent of $X_T$.
\end{remark}

Now we can state main theorems, on which our approximations are based on. Here we present their assertions only. For detailed proofs please refer to the Appendix.

\begin{theorem}[Karhunen-Lo\`{e}ve Theorem, Lo\`{e}ve \cite{loeve}]\label{Karh}
Let $(X_t)_{t\in[a,b]}$ be a centred stochastic process with a covariance function $K$. Then $X_t$ admits the
expansion (called the Karhunen-Lo\`{e}ve representation or Karhunen-Lo\`{e}ve expansion)
\begin{equation}\label{eq:46}
X_t=\sum_{n=0}^\infty\sqrt{\lambda_n}f_n(t)Z_n\quad\textrm{a.s.},
\end{equation}
where
\begin{equation}\label{eq:47}
Z_n=\frac{1}{\sqrt{\lambda_n}}\int_a^bX_tf_n(t)\,dt,\quad n\geq 0
\end{equation}
are orthogonal random variables such that $\E Z_n=0$ and $\E Z_n^2=1$. Moreover $\{f_n(t)\}_{n\geq 0}$ form an orthonormal basis in
$L^2([a,b])$ consisting of the eigenfunctions (corresponding to
non-zero eigenvalues) of a Fredholm operator $\mathcal{T}$, i.e.
\begin{equation}\label{eq:69}
\mathcal{T}f_n=\lambda_nf_n, \quad n\geq 0.
\end{equation}
The series \eqref{eq:46} converges a.s. and uniformly for $t\in[a,b]$ in the norm
$\|Y(t)\| \coloneqq \left(\E Y^2(t)\right)^{1/2}$. In particular, if $(X_t)_{t\in[a,b]}$ is a centred Gaussian process, then
$Z_n$'s given by \eqref{eq:47} are independent $N(0,1)$ random variables.
\end{theorem}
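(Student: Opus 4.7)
The plan is to view the covariance function $K$ as the kernel of a Fredholm integral operator $\mathcal{T}$ on $L^2([a,b])$, apply the spectral theorem for compact self-adjoint operators to get $\{f_n,\lambda_n\}$, then verify the expansion term by term. First, I would check that $\mathcal{T}$ defined by $(\mathcal{T}f)(s)=\int_a^b K(s,t)f(t)\,dt$ is compact (Hilbert--Schmidt since $K\in L^2([a,b]^2)$), self-adjoint (by symmetry $K(s,t)=K(t,s)$), and positive semi-definite (since $K$ is a covariance, $\int\!\!\int K(s,t)f(s)f(t)\,ds\,dt = \E\bigl(\int X_t f(t)\,dt\bigr)^2 \geq 0$). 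The spectral theorem then yields a countable orthonormal system $\{f_n\}$ in $L^2([a,b])$ of eigenfunctions with $\lambda_n\geq 0$, which form a basis on the orthogonal complement of $\ker\mathcal{T}$.

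Next I would define $Z_n$ as in \eqref{eq:47} (for $\lambda_n>0$) and use Fubini's theorem together with the defining property $\mathcal{T}f_n=\lambda_n f_n$ to verify the moments: $\E Z_n=0$ trivially, and
\begin{equation*}
\E(Z_n Z_m)=\frac{1}{\sqrt{\lambda_n\lambda_m}}\int_a^b\!\!\int_a^b K(s,t)f_n(s)f_m(t)\,ds\,dt = \frac{\sqrt{\lambda_n}}{\sqrt{\lambda_m}}\langle f_n,f_m\rangle = \delta_{nm}.
\end{equation*}
For the convergence of the series, set $S_N(t)=\sum_{n=0}^N\sqrt{\lambda_n}f_n(t)Z_n$. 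A direct expansion using the formulas above gives
\begin{equation*}
\E\bigl(X_t-S_N(t)\bigr)^2 = K(t,t)-\sum_{n=0}^N \lambda_n f_n(t)^2,
\end{equation*}
so the uniform $L^2$-convergence in \eqref{eq:46} is exactly Mercer's theorem applied to $K$: the series $\sum_n \lambda_n f_n(s)f_n(t)$ converges absolutely and uniformly to $K(s,t)$. For almost sure convergence I would observe that for fixed $t$ the sequence $(S_N(t))_N$ is an $L^2$-bounded martingale with respect to the filtration generated by $Z_0,Z_1,\ldots$, so Doob's martingale convergence theorem gives pointwise a.s. convergence; uniformity in $t$ then follows from the uniform estimate above combined with the equicontinuity provided by Mercer's theorem.

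Finally, in the Gaussian case the $Z_n$ are $L^2$-limits of Riemann sums of the Gaussian process $X$, hence are themselves jointly Gaussian; the vanishing of $\E(Z_nZ_m)$ for $n\neq m$ then upgrades uncorrelatedness to independence, and each $Z_n\sim N(0,1)$ by the variance computation.

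The main obstacle is the combination of Mercer's theorem and the uniform a.s. convergence: Mercer's theorem itself typically requires a continuity assumption on $K$, and transferring $L^2$ convergence of the random series to a.s.\ uniform convergence requires either a martingale argument for each $t$ combined with tightness, or a regularity hypothesis on the sample paths of $X$. In the applications to the Ornstein--Uhlenbeck kernel \eqref{eq:59} and the bridge kernel \eqref{eq:91}, $K$ is continuous, so Mercer's theorem applies directly and this obstacle is mild.
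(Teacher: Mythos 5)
The paper never proves this theorem: it is quoted from Lo\`{e}ve's book, and the Appendix only proves the two concrete expansions (Theorems \ref{thm1} and \ref{thm2}) plus the auxiliary lemmas. So the only thing to compare against is the classical argument, and your sketch is exactly that argument: view $K$ as a Hilbert--Schmidt, self-adjoint, positive kernel, apply the spectral theorem, verify $\E Z_n=0$, $\E(Z_nZ_m)=\delta_{nm}$ by Fubini and $\mathcal{T}f_n=\lambda_nf_n$, and identify the mean-square error $\E\big(X_t-S_N(t)\big)^2=K(t,t)-\sum_{n\le N}\lambda_nf_n(t)^2$, whose uniform decay to $0$ is Mercer's theorem. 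Those computations are correct, and the convergence ``uniformly for $t$ in the norm $\|Y(t)\|=(\E Y^2(t))^{1/2}$'' asserted in the statement is precisely this Mercer step --- you do not need almost-sure \emph{uniform} convergence, so the tightness/equicontinuity remarks are proving more than the theorem claims. Your caveat about continuity of $K$ (needed for Mercer, implicitly assumed here, and satisfied by the kernels \eqref{eq:59} and \eqref{eq:91}) is well placed.

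The one step that does not hold as written is the martingale argument for fixed-$t$ almost-sure convergence in the general (non-Gaussian) case: the $Z_n$ are only orthogonal, i.e.\ uncorrelated, so $\E\big(Z_{N+1}\,\big|\,Z_0,\ldots,Z_N\big)$ need not vanish and $(S_N(t))_N$ need not be a martingale with respect to the filtration generated by the $Z_n$. Doob's theorem therefore applies only in the Gaussian case, where the $Z_n$ are independent (there one can equally invoke Kolmogorov's theorem for sums of independent mean-zero variables). For a general centred process the honest reading of the ``a.s.'' in \eqref{eq:46} is that, for each $t$, $X_t$ equals the quadratic-mean limit of the partial sums almost surely --- which your mean-square identity already delivers --- rather than pathwise convergence of the series; if you want the stronger pathwise statement you must restrict to the Gaussian setting actually used in the paper.
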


The following two theorems give the explicit form of Karhunen-Lo\`{e}ve expansion for the Ornstein-Uhlenbeck process and Ornstein-Uhlenbeck bridge.

\begin{theorem}\label{thm1}
Let $(X_t)_{t\in[0,\tau]}$ be an Ornstein-Uhlenbeck process in the interval $[0,\tau]$ satisfying the equation \eqref{eq:58}. Then its Karhunen-Lo\`{e}ve expansion is of the form
\begin{equation}\label{eq:61}
X_t=\sum_{n=0}^\infty \sqrt{\lambda_{n}(\tau)}f_{n,\tau}(t)Z_n,
\end{equation}
where $Z_n$ are independent, identically distributed normal $N(0,1)$ random variables and $f_{n,\tau},\lambda_{n}(\tau)$ are given by
\begin{equation}
f_{n,\tau}(t)=\sqrt{\frac{2}{\tau + b\lambda_{n}(\tau)}}\sin(\omega_{n}(\tau)t),\quad \lambda_{n}(\tau)=\frac{1}{b^2 + \omega_{n}(\tau)^2},
\end{equation}
where $\omega_{n}(\tau)$ is the unique solution of the equation
\begin{equation}
\omega\cot(\omega \tau)=-b
\end{equation}
in the interval 
$\bigg(\Big(n+\frac{1}{2}\Big)\frac{\pi}{\tau},(n + 1)\frac{\pi}{\tau}\bigg),\ n\in\N\cup\{0\}.$
\end{theorem}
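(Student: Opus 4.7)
The plan is to apply the Karhunen--Lo\`{e}ve Theorem (Theorem \ref{Karh}): the result reduces to identifying the positive eigenvalues and $L^2([0,\tau])$--normalised eigenfunctions of the Fredholm operator $\mathcal{T}$ with kernel $K$ given by \eqref{eq:59}, and checking their completeness. The strategy is to convert the integral eigenequation into a boundary value problem for a linear second--order ODE, solve it, normalise, and then invoke a spectral theorem for the completeness.

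First I would rewrite the eigenequation $\mathcal{T}f=\lambda f$ by separating the two ranges in $|t-s|$ as
\begin{equation*}
2b\lambda f(s) = \int_0^s e^{-b(s-t)}f(t)\,dt + \int_s^\tau e^{-b(t-s)}f(t)\,dt - e^{-bs}\int_0^\tau e^{-bt}f(t)\,dt,
\end{equation*}
and then differentiate twice in $s$. The boundary contributions at $t=s$ cancel, and the surviving integrals recombine through the eigenequation itself, producing the ODE $\lambda f''(s)=(b^2\lambda-1)f(s)$. Anticipating $\lambda<1/b^2$, I set $\omega^2:=1/\lambda-b^2>0$, so that $\lambda=1/(b^2+\omega^2)$ and the general solution takes the form $f(s)=A\sin(\omega s)+B\cos(\omega s)$.

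Next I would extract two boundary conditions directly from the integral equation. Evaluating at $s=0$ collapses the $|t-s|$ split and makes the $e^{-bs}$ correction cancel, forcing $f(0)=0$ and hence $B=0$. A convenient second relation comes from combining the eigenequation with its first derivative to obtain
\begin{equation*}
\lambda\bigl(f'(s)+bf(s)\bigr)=\int_s^\tau e^{-b(t-s)}f(t)\,dt,
\end{equation*}
whose value at $s=\tau$ gives $f'(\tau)+bf(\tau)=0$. Substituting $f(s)=A\sin(\omega s)$ yields the transcendental equation $\omega\cot(\omega\tau)=-b$. On each interval $\omega\tau\in(n\pi,(n+1)\pi)$ the map $\omega\mapsto\omega\cot(\omega\tau)$ decreases from $+\infty$ to $-\infty$; since $\cot(\omega\tau)<0$ is required, the root must lie in $\omega\tau\in\bigl(n\pi+\pi/2,(n+1)\pi\bigr)$, and on this half--period the derivative $\cot(\omega\tau)-\omega\tau/\sin^2(\omega\tau)$ is manifestly negative, so exactly one $\omega_n(\tau)$ appears in each prescribed interval.

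Finally I would fix $A$ from $\int_0^\tau\sin^2(\omega s)\,ds=\tau/2-\sin(2\omega\tau)/(4\omega)$, using $\cos(\omega\tau)=-(b/\omega)\sin(\omega\tau)$ and $\sin^2(\omega\tau)=\omega^2/(b^2+\omega^2)$ to reduce the expression to $(\tau+b\lambda)/2$, hence $A=\sqrt{2/(\tau+b\lambda)}$. Completeness of $\{f_{n,\tau}\}_{n\ge 0}$ in $L^2([0,\tau])$, needed for \eqref{eq:46} to hold as stated, follows from Mercer's theorem, as $K$ is continuous, symmetric and positive definite, so that no nonzero eigenvalues escape the oscillatory analysis. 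The step I expect to be the main obstacle is the boundary analysis together with the monotonicity argument locating a unique root of $\omega\cot(\omega\tau)=-b$ in the prescribed half--period; the ODE reduction and the normalisation computation are essentially mechanical.
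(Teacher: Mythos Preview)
Your overall strategy---reduce the integral eigenequation to the ODE $\lambda f''=(b^2\lambda-1)f$, read off the two boundary conditions $f(0)=0$ and $f'(\tau)+bf(\tau)=0$, solve the transcendental equation, and normalise---is exactly the route the paper takes. Two steps, however, are missing from your proposal and are carried out explicitly in the paper.

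First, you ``anticipate'' $\lambda<1/b^2$ and go straight to trigonometric solutions. The paper treats this as a genuine case split: if $\lambda\ge 1/b^2$ the general solution is $f(t)=c\bigl(e^{\hat\omega t}-e^{-\hat\omega t}\bigr)$ (using $f(0)=0$), and inserting this into $f'(\tau)+bf(\tau)=0$ gives $\hat\omega\bigl(e^{\hat\omega\tau}+e^{-\hat\omega\tau}\bigr)=-b\bigl(e^{\hat\omega\tau}-e^{-\hat\omega\tau}\bigr)$, which is impossible for $b>0$ since the two sides have opposite signs. Without this, you have not shown that every eigenvalue of $\mathcal{T}$ is captured by your list. (A minor related point: on the first branch $n=0$ the function $\omega\mapsto\omega\cot(\omega\tau)$ tends to $1/\tau$, not $+\infty$, as $\omega\to 0^+$; this does not affect existence of a root at level $-b<0$, but your description of the range is inaccurate there.)

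Second, and more importantly, differentiating the integral equation only gives \emph{necessary} conditions; you never verify that your candidates $f_{n,\tau}$ actually satisfy $K\ast f_{n,\tau}=\lambda_n f_{n,\tau}$. The paper closes this loop by setting $g_n:=K\ast f_{n,\tau}-\lambda_n f_{n,\tau}$, showing $g_n''=b^2 g_n$, and then using $g_n(0)=0$ together with $g_n'(\tau)=-b\,g_n(\tau)$ to force $g_n\equiv 0$. Your appeal to Mercer's theorem does not substitute for this: Mercer guarantees that $K$ admits an eigenfunction expansion, but it neither certifies that your specific candidates are eigenfunctions nor that no eigenvalue has been missed. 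Once you add the hyperbolic case and the $g_n$ verification, your argument coincides with the paper's.
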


\begin{theorem}\label{thm2}
Let $(\warunk{X}_t)_{t\in[0,T]}$ be an Ornstein-Uhlenbeck bridge process in the interval $[0,T]$ satisfying \eqref{eq:102}. Then its Karhunen-Lo\`{e}ve expansion is of the form
\begin{equation}\label{eq:93}
\warunk{X}_t=\sum_{n=1}^\infty \sqrt{\warunk{\lambda}_{n}(T)}\warunk{f}_{n,T}(t)\warunk{Z}_n,
\end{equation}
where $\warunk{Z}_n$ are independent, identically distributed normal $N(0,1)$ random variables and $f_{n,T},\warunk{\lambda}_{n}(T)$ are given by

\begin{equation}	
\warunk{f}_{n,T}(t)=\sqrt{\frac{2}{T}}\sin\bigg(\frac{n\pi t}{T}\bigg),\quad \warunk{\lambda}_n(T)=\frac{T^2}{b^2T^2+n^2\pi^2}.
\end{equation}
\end{theorem}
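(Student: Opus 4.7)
The plan is to invoke Theorem~\ref{Karh}, which reduces the task to identifying the non-zero eigenvalues $\warunk{\lambda}_n(T)$ and the corresponding $L^2([0,T])$-orthonormal eigenfunctions $\warunk{f}_{n,T}$ of the Fredholm operator with kernel $\warunk{K}$. Once the sinusoidal family in the statement is verified to exhaust all such pairs, the Gaussianity of $\warunk{X}$ (noted in Remark~\ref{eq:247}) automatically promotes the coefficients in \eqref{eq:47} to independent $N(0,1)$ variables $\warunk{Z}_n$.

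The first step is to convert the eigenvalue equation
\begin{equation*}
\int_0^T \warunk{K}(s,t)\,f(s)\,ds = \lambda f(t)
\end{equation*}
into an ODE. Using $\warunk{K}(s,t) = K(s,t) - K(s,T)K(t,T)/V(T)$ and setting $C = \int_0^T K(s,T) f(s)\,ds$, the equation rearranges to $\int_0^T K(s,t) f(s)\,ds = \lambda f(t) + C\,K(t,T)/V(T)$. A direct calculation, splitting the integral at $s=t$ and applying the Leibniz rule twice, shows that $\partial_t^2 \int_0^T K(s,t) f(s)\,ds = b^2 \int_0^T K(s,t) f(s)\,ds - f(t)$; the $-f(t)$ contribution comes from the unit jump of $\partial_t K(s,t)$ across $s=t$, while the exponential form of $K$ in \eqref{eq:59} supplies the $b^2$ factor. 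Because $K(t,T)$ also satisfies $\partial_t^2 K(t,T) = b^2 K(t,T)$ on $[0,T)$, applying $\partial_t^2$ to both sides and substituting back via the original equation yields the boundary-value problem
\begin{equation*}
\lambda \warunk{f}''(t) = (b^2 \lambda - 1)\,\warunk{f}(t),\qquad \warunk{f}(0) = \warunk{f}(T) = 0,
\end{equation*}
where the Dirichlet conditions come for free from $\warunk{K}(s,0) = \warunk{K}(s,T) = 0$ together with $\lambda \neq 0$.

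Non-trivial solutions require $b^2 \lambda < 1$; writing $1/\lambda = b^2 + \omega^2$ gives $\warunk{f}(t) = c\sin(\omega t)$, and the condition $\warunk{f}(T) = 0$ forces $\omega = n\pi/T$ for some integer $n \geq 1$. This produces exactly $\warunk{\lambda}_n(T) = T^2/(b^2 T^2 + n^2 \pi^2)$, and the $L^2$ normalization $c^2 \int_0^T \sin^2(n\pi t/T)\,dt = 1$ fixes $c = \sqrt{2/T}$, matching the claimed $\warunk{f}_{n,T}$. Completeness is then automatic because $\{\sqrt{2/T}\sin(n\pi t/T)\}_{n\geq 1}$ is the standard orthonormal sine basis of $L^2([0,T])$, so no eigenfunction with non-zero eigenvalue is missed.

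The main technical obstacle I anticipate is making the second differentiation rigorous across the kink of $K$ at $s=t$: the intuitive identity $\partial_t^2 K(s,t) = b^2 K(s,t) - \delta(t-s)$ has to be justified by carefully splitting $\int_0^T = \int_0^t + \int_t^T$ and tracking the Leibniz boundary contributions. As a purely computational fallback, one can instead verify the conclusion by direct substitution, evaluating $\int_0^T \warunk{K}(s,t)\,\warunk{f}_{n,T}(s)\,ds$ via the two exponential pieces of $K$ and standard integrals of the form $\int_0^T e^{-b|t-s|}\sin(n\pi s/T)\,ds$, and checking that the bridge correction cancels to leave precisely $\warunk{\lambda}_n(T)\,\warunk{f}_{n,T}(t)$.
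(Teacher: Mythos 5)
Your route is essentially the paper's: invoke Theorem \ref{Karh}, turn the eigenvalue equation for the kernel $\warunk{K}$ into the ODE $\warunk{\lambda}\warunk{f}''=(b^2\warunk{\lambda}-1)\warunk{f}$ with Dirichlet conditions coming from $\warunk{K}(0,\cdot)=\warunk{K}(T,\cdot)=0$, solve, and normalize. Your way of handling the kink at $s=t$ — writing $\warunk{K}=K-K(\cdot,T)K(\cdot,T)/V(T)$, using the distributional identity $\partial_t^2K(s,t)=b^2K(s,t)-\delta(t-s)$ for the first piece and $\partial_t^2K(t,T)=b^2K(t,T)$ for the rank-one correction — is a legitimate variant of the paper's Lemma \ref{lem4}/Corollary \ref{cor6}, and your one-line dismissal of $b^2\lambda\geq 1$ (hyperbolic or linear solutions cannot vanish at both endpoints nontrivially) is the paper's case $1^\circ$.

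The one genuine soft spot is the converse. Differentiating the integral equation twice only shows that \emph{if} $(\lambda,f)$ is an eigenpair with $\lambda\neq 0$, then $f\propto\sin(n\pi t/T)$ and $\lambda=\warunk{\lambda}_n(T)$; it does not show that every such sine actually \emph{is} an eigenfunction, i.e.\ that the expansion runs over all $n\geq 1$ rather than over some proper subset. Your sentence "completeness is automatic\ldots so no eigenfunction with non-zero eigenvalue is missed" justifies the direction you already have (nothing outside the sine family can be an eigenfunction), not the one still needed. The paper closes this loop explicitly: it sets $\warunk{g}_n=\warunk{K}\ast\warunk{f}_{n,T}-\warunk{\lambda}_n(T)\warunk{f}_{n,T}$, shows $\warunk{g}_n''=b^2\warunk{g}_n$ with $\warunk{g}_n(0)=\warunk{g}_n(T)=0$, and concludes $\warunk{g}_n\equiv 0$. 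You can repair your write-up either by promoting your "fallback" direct substitution to a required step, or in one line from Theorem \ref{Karh} itself: the eigenfunctions with non-zero eigenvalues form an orthonormal basis of $L^2([0,T])$, they are all among $\bigl\{\sqrt{2/T}\sin(n\pi t/T)\bigr\}_{n\geq1}$, and a proper subfamily of that system is not complete (the omitted sine is orthogonal to all the others), so every sine must occur with the computed eigenvalue. As written, though, the sufficiency step that the paper proves is missing.
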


For the purpose of further elaborations we also introduce a few important notations and recall some useful theorems and lemmas.\\

\begin{notation}
Let $H_n$ denote the probabilistic Hermite polynomial of degree $n$ and let $h_{n,1},\ldots,h_{n,n}$ denote its zeros (in ascending order).
\end{notation}

\begin{notation}
For a function $f:\Real \rightarrow \Real$ let $\mathfrak{L}_n(f)$ denote its Lagrange interpolation polynomial with nodes $h_{n,k}$, i.e.
\begin{displaymath}
\left[\mathfrak{L}_n(f)\right](z) \coloneqq \sum_{k=1}^{n} f(h_{n,k}) \frac{H_n(z)}{(z-h_{n,k})\prod_{j\neq k}(h_{n,k}-h_{n,j})}.
\end{displaymath}
\end{notation}

\begin{notation}
For a function $f:\Real \rightarrow \Real$ let $\mathfrak{Q}_n(f)$ denote the Gauss-Hermite quadrature of degree $n$, rescaled to the probabilistic convention, with abscisas $h_{n,k}$ and weights $w_{n,k}$, i.e. 
\begin{equation}
\mathfrak{Q}_n(f) \coloneqq \sum_{k=1}^{n} w_{n,k}f(h_{n,k}),
\end{equation}
where 
\begin{equation}
w_{n,k}=\frac{2^{n-1}n!}{n^2[H_{n-1}(h_{n,k})]^2}.
\end{equation}
\end{notation}

\begin{remark}
It is well known that for functions $f \in \mathbb{L}^2\left(\Real,\frac{1}{\sqrt{2\pi}}e^{-\frac{1}{2}x^2}dx\right)$ asymptotically holds 
\begin{equation}\label{eq:233}
\lim_{n \to +\infty}\mathfrak{Q}_n(f)=\frac{1}{\sqrt{2\pi}}\int_{-\infty}^{+\infty} f(z) e^{-\frac{1}{2}z^2} dz,
\end{equation}
which becomes an equation for polynomials of degree up to $2n-1$.
\end{remark}

\begin{notation}
Define the operator $\mathfrak{H}: \Real[X] \rightarrow \Real[X]$ and the functional $\mathfrak{h}: \Real[X] \rightarrow \Real$ such that for polynomial $W(z) = \sum_{k=0}^n w_k z^k$
\begin{equation}
\left[\mathfrak{H}(W)\right](z) \coloneqq \sum_{k=0}^{n-1} u_k z^k,\quad \mathfrak{h}(W) \coloneqq u_{-1},
\end{equation}
where coefficients $u_k$ are defined recursively as follows
\begin{equation}
u_{n} =  0, \quad u_{n+1} =  0,
\end{equation}
\begin{equation}
u_k = (k+2) u_{k+2} + w_{k+1}, \quad k=n-1,\ldots,-1.
\end{equation}
\end{notation}

\begin{remark}
Note that $\mathfrak{H}, \mathfrak{h}$ are properly defined without the implicit assumption $deg(W)=n$. Indeed, for any $m>n$ setting $0=w_m=...=w_{n+1}$ and $\przybl{W}(z)= \sum_{k=0}^m w_k z^k$ we obtain
\begin{equation}
\mathfrak{H}(\przybl{W})=\mathfrak{H}(W), \quad \mathfrak{h}(\przybl{W})=\mathfrak{h}(W).
\end{equation}
\end{remark}

\begin{lemma}\label{lem55}
For any polynomial $W(z)$
\begin{equation}
-\int W(z) e^{-\frac{1}{2}z^2} dz = \left[\mathfrak{H}(W)\right](z) e^{-\frac{1}{2}z^2} - \mathfrak{h}(W)\Phi(z)+C.
\end{equation}
\end{lemma}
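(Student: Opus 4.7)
The plan is to verify the identity by differentiating the right-hand side and checking that the result equals $-W(z)e^{-z^2/2}$. Write $W(z) = \sum_{k=0}^n w_k z^k$, set $U \coloneqq \mathfrak{H}(W)$, so $U(z) = \sum_{k=0}^{n-1} u_k z^k$, and $c \coloneqq \mathfrak{h}(W) = u_{-1}$. Using the product rule together with $(e^{-z^2/2})' = -z\,e^{-z^2/2}$ and $\Phi'(z) = e^{-z^2/2}$ (the convention under which the statement is written), one gets
\[
\frac{d}{dz}\Bigl(U(z)\,e^{-z^2/2} - c\,\Phi(z)\Bigr) = \bigl(U'(z) - zU(z) - c\bigr)\,e^{-z^2/2}.
\]
The lemma therefore reduces to the purely algebraic polynomial identity $zU(z) - U'(z) + c = W(z)$; once this is established, integration (with the constant of integration absorbed into $C$) finishes the proof.

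Next I would verify the polynomial identity by matching coefficients of $z^k$ for $k = 0, 1, \ldots, n$. Expanding,
\[
zU(z) = \sum_{k=1}^{n} u_{k-1}\,z^k, \qquad U'(z) = \sum_{k=0}^{n-2}(k+1)u_{k+1}\,z^k,
\]
so the coefficient of $z^k$ in $zU(z) - U'(z) + c$ equals $u_{k-1} - (k+1)u_{k+1}$, with the conventions $u_{-1} = c$ (which supplies the constant term at $k=0$) and $u_n = u_{n+1} = 0$ (so that $U'$ contributes nothing at $k = n-1$ and $zU$ contributes $u_{n-1}$ at $k = n$). Re-indexing the defining recursion $u_k = (k+2)u_{k+2} + w_{k+1}$ by $j = k+1$ yields exactly $u_{j-1} - (j+1)u_{j+1} = w_j$ for $j = 0, 1, \ldots, n$, matching the coefficient of $z^j$ in $W(z)$ term-by-term.

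There is no real obstacle beyond careful index bookkeeping: the downward recursion, together with its terminal conditions $u_n = u_{n+1} = 0$, is engineered precisely to invert the first-order operator $U \mapsto zU - U'$ on polynomials modulo constants, while the lost constant term is supplied by $c = u_{-1}$ through the $\Phi(z)$ contribution. The earlier remark ensuring $\mathfrak{H}$ and $\mathfrak{h}$ are well-defined under zero-padding means no separate base case for low-degree $W$ is needed; the argument handles all polynomial degrees uniformly.
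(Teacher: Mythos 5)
Your proof is correct, but it takes a genuinely different route from the paper's. You differentiate the right-hand side, reduce the lemma to the single algebraic identity $zU(z)-U'(z)+\mathfrak{h}(W)=W(z)$ with $U=\mathfrak{H}(W)$, and verify it by matching coefficients directly against the defining recursion, re-indexed as $w_j=u_{j-1}-(j+1)u_{j+1}$ for $j=0,\dots,n$ (with $u_{-1}=\mathfrak{h}(W)$, $u_n=u_{n+1}=0$); your bookkeeping at the boundary indices $k=0$, $k=n-1$, $k=n$ is right. The paper instead first proves linearity of $\mathfrak{H}$ and $\mathfrak{h}$ by induction on the recursion, reduces to monomials $W(z)=z^n$, solves the recursion in closed form ($u_k=\{(n-k)\bmod 2\}\,(n-1)!!/k!!$ and $u_{-1}=\{(n+1)\bmod 2\}\,(n-1)!!$), and compares with repeated integration by parts of $-\int z^n e^{-\frac12 z^2}\,dz$. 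Your argument is shorter and more self-contained: it needs neither the linearity step nor the explicit double-factorial formulas, and it makes transparent that the recursion is engineered to invert the first-order operator $U\mapsto zU-U'$ on polynomials modulo constants, with the constant recovered through the $\Phi$ term. What the paper's route buys is the explicit closed form of $\mathfrak{H}(z^n)$ and $\mathfrak{h}(z^n)$, which may be of independent interest. Both arguments share the same convention caveat, which you flag explicitly: the identity as stated reads $\Phi'(z)=e^{-\frac12 z^2}$ (exactly as in the paper's own integration-by-parts display); with $\Phi$ the standard normal distribution function a factor $\sqrt{2\pi}$ would have to accompany $\mathfrak{h}(W)\Phi(z)$.
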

For the proof refer to Appendix.\\

\begin{lemma}\label{lem666}
For any continuous, bounded function $f:\Real \rightarrow \Real$
\begin{equation}
\lim_{n \to +\infty} \mathfrak{L}_n(f) = f,
\end{equation}
where convergence holds in the $\mathbb{L}^2\left(\Real,\frac{1}{\sqrt{2\pi}}e^{-\frac{1}{2}x^2}dx\right)$ norm.
\end{lemma}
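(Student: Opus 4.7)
The strategy is to combine the duality between Lagrange interpolation at Hermite nodes and Gauss--Hermite quadrature with density of polynomials in $L^2(\Real,\mu)$, where $\mu(dx)=\tfrac{1}{\sqrt{2\pi}}e^{-x^2/2}\,dx$ denotes the Gaussian weight.

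The first step is the norm identity
\begin{equation*}
\norm{\mathfrak{L}_n(f)}_{L^2(\mu)}^2 \;=\; \mathfrak{Q}_n(f^2).
\end{equation*}
Writing $\ell_{n,k}$ for the Lagrange basis polynomial associated to $h_{n,k}$, so that $\mathfrak{L}_n(f)=\sum_k f(h_{n,k})\ell_{n,k}$, each product $\ell_{n,j}\ell_{n,k}$ has degree at most $2n-2\leq 2n-1$. Hence Gauss--Hermite quadrature is exact on it and yields $\int \ell_{n,j}\ell_{n,k}\,d\mu = w_{n,k}\delta_{jk}$; squaring $\mathfrak{L}_n(f)$ and integrating termwise gives the identity. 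As a corollary, $\norm{\mathfrak{L}_n(f)}_{L^2(\mu)}\leq\norm{f}_\infty$, so the operators $\mathfrak{L}_n$ are uniformly bounded from bounded continuous functions into $L^2(\mu)$.

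Next comes the polynomial approximation. Fix $\eps>0$. Since polynomials are dense in $L^2(\mu)$ (the normalised probabilistic Hermite polynomials form an orthonormal basis), there exists a polynomial $P$ of some degree $N$ with $\norm{f-P}_{L^2(\mu)}<\eps$. For any $n>N$ the Lagrange interpolant reproduces polynomials of degree less than $n$, so $\mathfrak{L}_n(P)=P$ and
\begin{equation*}
\mathfrak{L}_n(f)-f \;=\; \mathfrak{L}_n(f-P) - (f-P).
\end{equation*}
Combining the triangle inequality with the norm identity gives
\begin{equation*}
\norm{\mathfrak{L}_n(f)-f}_{L^2(\mu)} \;\leq\; \norm{f-P}_{L^2(\mu)} + \sqrt{\mathfrak{Q}_n\bigl((f-P)^2\bigr)}.
\end{equation*}
To close the argument I would apply \eqref{eq:233} to $g=(f-P)^2$. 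Since $f$ is bounded and $P$ is a polynomial, $g$ is continuous of at most polynomial growth, hence $g\in L^2(\mu)$, and \eqref{eq:233} yields $\mathfrak{Q}_n((f-P)^2)\to\norm{f-P}_{L^2(\mu)}^2<\eps^2$. Taking $\limsup_{n\to\infty}$ and then $\eps\to 0$ concludes the proof.

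The main obstacle is exactly this last step: the function $(f-P)^2$ is neither a polynomial (so quadrature is not exact) nor globally bounded (so naive sup-norm bounds fail), yet its Gauss--Hermite sums must be shown to converge to the right integral. The resolution rests on boundedness of $f$ together with the rapid decay of the Gaussian weight, which places $(f-P)^2$ safely inside $L^2(\mu)$ and legitimises the use of \eqref{eq:233}. Everything else---the exactness of Gauss--Hermite quadrature on polynomials of degree $\leq 2n-1$ and the completeness of Hermite polynomials in $L^2(\mu)$---is classical.
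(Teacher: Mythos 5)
Your proof is correct, but it takes a genuinely different route from the paper. The paper proves Lemma \ref{lem666} by invoking Nevai's theorem on mean convergence of Lagrange interpolation at Hermite zeros (Theorem \ref{nevai}): after rescaling between the probabilistic and physicists' conventions ($z=\sqrt{2}\warunk{z}$), boundedness of $f$ gives the decay hypothesis \eqref{eq:007}, and the case $p=2$ of Nevai's result delivers the weighted $\mathbb{L}^2$ convergence directly. You instead give a self-contained Erd\H{o}s--Tur\'{a}n-style argument: the identity $\norm{\mathfrak{L}_n(g)}^2_{\mathbb{L}^2(\mu)}=\mathfrak{Q}_n(g^2)$ (exactness of the $n$-point rule up to degree $2n-1$ applied to the products $\ell_{n,j}\ell_{n,k}$), reproduction of polynomials $\mathfrak{L}_n(P)=P$ for $n>\deg P$, density of polynomials in $\mathbb{L}^2(\mu)$, and convergence of $\mathfrak{Q}_n\bigl((f-P)^2\bigr)$ to $\norm{f-P}^2_{\mathbb{L}^2(\mu)}$. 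This is a valid alternative; its only delicate point is the one you flag yourself: the literal statement of \eqref{eq:233} (``for $f\in\mathbb{L}^2(\mu)$'') is too weak to carry the last step on its own, since quadrature sums are not even well defined on $\mathbb{L}^2$ equivalence classes and need not converge for arbitrary square-integrable integrands. The honest justification is the classical fact that Gauss quadrature for a determinate measure converges for \emph{continuous} functions of polynomial growth (weak convergence of the quadrature measures plus uniform integrability, the latter coming from exactness on even moments); since $(f-P)^2$ is continuous and polynomially bounded, this applies and your argument closes --- and it is no looser than the paper's own use of \eqref{eq:233} for $\exp(-I_{0,\tau})$. In exchange, your route avoids the comparatively deep interpolation theorem of Nevai entirely, whereas the paper offloads all the analytic difficulty onto that reference at the cost of a change-of-variables bookkeeping step.
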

For the proof refer to the Appendix.

\section{Approximation for zero-coupon bonds}

In this section we present two semi-analytical formulae for approximate pricing of zero-coupon bonds in a BK model. The derivation of the first formula is based on general pricing fundamentals and application of Theorem \ref{thm1}. The second formula adds some additional approximations and simplifications, making it more usable without much loss of accuracy.
\\
\begin{notation}
Let $T, \tau > 0, n \geq 0$. We define the following functions
\begin{equation}
F_{n,\tau}(t,z_0,\ldots,z_n) \coloneqq \exp \left( \sigma \sum_{k=0}^n \sqrt{\lambda_{k}(\tau)}f_{k,\tau}(t)z_k \right),
\end{equation}
\begin{equation}
G_{n,\tau}(t) \coloneqq \exp\left(\frac{\sigma^2}{2}\Big(V(t)- \sum_{k=0}^n \lambda_{k}(\tau)f_{k,\tau}(t)^2\Big)\right),
\end{equation}
\begin{equation}
 I_{n,\tau}(z_0,\ldots,z_n) \coloneqq \int_0^\tau \bar{r}_{T,t} G_{n,\tau}(t) F_{n,\tau}(t,z_0,\ldots,z_n)\,dt.
\end{equation}
\end{notation}

\newtheorem{aprox}{Approximation}
\begin{aprox}\label{ap7}
The sequence 
\begin{equation}
B_n(T,T+\tau) \coloneqq \frac{1}{(\sqrt{2\pi})^{n+1}}\int_{-\infty}^{+\infty} \ldots \int_{-\infty}^{+\infty} \exp \left( -I_{n,\tau}(z_0,\ldots,z_n) -\frac{z_0^2 + \ldots + z_n^2}{2} \right) \,dz_0 \ldots dz_n
\end{equation}
provides approximation
\begin{equation}
B(T,T+\tau) = \lim_{n \rightarrow +\infty} B_n(T,T+\tau).
\end{equation}
\end{aprox}

\begin{proof}
Considering the process $\Big(X^{(T)}_t\Big)_{t\in[0,\tau]}$ as defined in Proposition \ref{prop1} and it's Karhunen-Lo\`{e}ve expansion from Theorem \ref{thm1}, we have
\begin{equation}
r_{T+t} = \bar{r}_{T,t}\exp\Big(\sigma X^{(T)}_t\Big) = \przybl{r_{n,T+t}} R_{n,\tau}(t),
\end{equation}
where
\begin{equation}
\przybl{r_{n,T+t}} = \bar{r}_{T,t} G_{n,\tau}(t) F_{n,\tau}(t,Z_0,\ldots,Z_n),
\end{equation}
\begin{equation}
R_{n,\tau}(t) = G_{n,\tau}^{-1}(t) \exp \left( \sum_{k=n+1}^{\infty} \sigma  \sqrt{\lambda_{k}(\tau)}f_{k,\tau}(t)Z_k \right).
\end{equation}
Notice that $\bar{r}_{T,t}$ is a deterministic, continuous function of $t$, $\przybl{r_{n,T+t}}$ is $\mathcal{F}_T$-independent and 
\begin{equation}\label{eq:33345}
\E\przybl{r_{n,T+t}}=\bar{r}_{T,t}\exp\left(\frac{\sigma^2}{2}V(t)\right), \quad t \in [0,\tau], n\geq 0,
\end{equation}
\begin{equation}\label{eq:77}
\E R_{n,\tau}(t)=1, \quad t \in [0,\tau], n\geq 0,
\end{equation}
\begin{equation}\label{eq:11122}
\E R_{n,\tau}^2(t)=\exp\left(\sigma^2\sum_{k=n+1}^{+\infty}\lambda_{k}(\tau)f_{k,\tau}(t)^2\right), \quad t \in [0,\tau], n\geq 0.
\end{equation}
Fix $t \in [0,\tau]$. Recalling formulas from Theorem \ref{thm1}, we notice that 
\begin{equation}
\sup_{t\in[0,\tau]}f_{k,\tau}(t)^2 \leq \frac{2}{\tau}, \qquad
\lambda_{k}(\tau) \leq \frac{4\tau^2}{(2k+1)^2\pi^2},
\end{equation}
therefore from \eqref{eq:11122}
\begin{equation}\label{eq:2333}
\E R_{n,\tau}^2(t)\leq \exp\left(\frac{8\tau\sigma^2}{\pi^2}\sum_{k=n+1}^{+\infty}\frac{1}{(2k+1)^2}\right) = \colon M_n.
\end{equation}
Hence by Jensen's inequality 
\begin{equation}\label{eq:4321}
\E\left|R_{n,\tau}(t)-1\right| \le \sqrt{\E\left(R_{n,\tau}(t)-1\right)^2} = \sqrt{\E R_{n,\tau}^2(t)-1} \le \sqrt{M_n^2 - 1}.
\end{equation}
Finally, since
\begin{equation}
B_n(T,T+\tau) = \E \exp \left(- \int_0^\tau \przybl{r_{n,T+t}}\,dt \Bigg| \mathcal{F}_T\right),
\end{equation}
then applying the inequality $|\exp(-x)-\exp(-y)| \leq |x-y|$ for $x,y \geq 0$, switching the order of integration and using \eqref{eq:33345}, \eqref{eq:4321} we obtain
\begin{equation}
\begin{split}
&\big|B(T,T+\tau)-B_n(T,T+\tau)\big| \leq\E\left(\Bigg|\int_{0}^{\tau}\left(r_{T+t}-\przybl{r_{n,T+t}}\right)dt\Bigg|\Bigg|\mathcal{F}_T\right)\\
&\leq\E\left(\int_{0}^{\tau}\left|r_{T+t}-\przybl{r_{n,T+t}}\right|dt\Bigg|\mathcal{F}_T\right)
=\int_{0}^{\tau}\E\przybl{r_{n,T+t}}\E|R_{n,\tau}(t)-1|dt \\
&\leq \tau \sup_{t \in [0,\tau]}\bar{r}_{T,t} \exp\left(\frac{\sigma^2}{2}V(T)\right)
\sqrt{M_n^2 - 1}.
\end{split}
\end{equation}
and since $M_n \to 0$, we get our assertion.
\end{proof}\qed

\begin{remark}
Although the formula from Approximation \ref{ap7} may seem quite complicated, it is applicable and for small $n$ can easily be calculated numerically. In particular, the external integrals can be evaluated with high accuracy by the use of the (probabilistic) Gauss-Hermite quadrature, whereas the internal integral $I_{n,\tau}$ can be calculated using the Romberg method or Legendre quadrature (provided that $a(t)$ satisfies sufficient smoothness conditions).
\end{remark}

\begin{remark}
Approximation \ref{ap7} presents the convergence property of the sequence \\ $\left\{B_n(T,T+\tau)\right\}_{n=0}^{+\infty}$. For practical applications however, one shall use approximation $B_n(T,T+\tau)$ for a specific integer $n$. Fortunately, the first approximation ($n=0$) already appears to give very accurate results. This can be understood by taking into account that the first eigenvalue in the Karhunen-Lo\`{e}ve expansion accounts for the dominant part of the overall variance of the Ornstein-Uhlenbeck process. Based on this observation, we introduce a new series of zero-coupon bond price approximations.
\end{remark}

\begin{aprox}\label{ap8}
The sequence
\begin{equation}
\przybl{B_n}(T,T+\tau) \coloneqq \mathfrak{Q}_n(\exp(-I_{0,\tau})), \quad n\geq 1
\end{equation}
provides approximation
\begin{equation}
B_0(T,T+\tau)=\lim_{n\to +\infty} \przybl{B_n}(T,T+\tau).
\end{equation}
\end{aprox}

\begin{proof}
Note that $\przybl{I_{0,\tau}}(z)$ is a strictly positive, continuous function of $z \in \Real$. Therefore $\exp(-\przybl{I_{0,\tau}}(\cdotp)) \in \mathbb{L}^2\left(\Real, \frac{1}{\sqrt{2\pi}}e^{-\frac{1}{2}x^2}dx\right)$ and by \eqref{eq:233} we get our assertion.
\end{proof}\qed

Important property of such approximation is its monotonic dependence on $r_T$.

\begin{proposition}\label{prop11}
For any $n\geq 1$, $\przybl{B_n}(T,T+\tau)$ is a strictly decreasing function of $r_T$. Moreover
\begin{equation}
\lim_{r_T \to 0}\przybl{B_n}(T,T+\tau)=1,\qquad \lim_{\quad r_T \rightarrow +\infty}\przybl{B_n}(T,T+\tau)=0.
\end{equation}
\end{proposition}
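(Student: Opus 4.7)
The plan is to peel off the $r_T$-dependence of the integrand in $I_{0,\tau}$, push strict monotonicity through the quadrature sum using positivity of the Gauss--Hermite weights, and then use monotone convergence to obtain the two limits.

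First I would rewrite $\bar{r}_{T,t}=r_T^{e^{-bt}}\exp(A(T,T+t))$ and observe that for every fixed $t\in[0,\tau]$ the exponent $e^{-bt}\in[e^{-b\tau},1]$ is strictly positive, so the map $r_T\mapsto\bar{r}_{T,t}$ is strictly increasing on $(0,\infty)$. Since $G_{0,\tau}(t)>0$ and $F_{0,\tau}(t,z)>0$ for all $t,z$, and neither factor depends on $r_T$, the integrand in
\begin{equation*}
I_{0,\tau}(z)=\int_{0}^{\tau}\bar{r}_{T,t}\,G_{0,\tau}(t)\,F_{0,\tau}(t,z)\,dt
\end{equation*}
is strictly increasing in $r_T$ pointwise in $t$, so $I_{0,\tau}(z)$ is itself strictly increasing in $r_T$ for every $z\in\Real$, and consequently $z\mapsto\exp(-I_{0,\tau}(z))$ is strictly decreasing in $r_T$ at every $z$.

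Next I would invoke positivity of the Gauss--Hermite weights: the explicit formula
$w_{n,k}=2^{n-1}n!/(n^{2}[H_{n-1}(h_{n,k})]^{2})$ makes $w_{n,k}>0$ manifest, so
\begin{equation*}
\przybl{B_n}(T,T+\tau)=\sum_{k=1}^{n}w_{n,k}\exp\bigl(-I_{0,\tau}(h_{n,k})\bigr)
\end{equation*}
is a strictly positive linear combination of strictly decreasing functions of $r_T$, hence strictly decreasing in $r_T$.

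For the two boundary limits I would argue by monotone convergence on the inner integral: as $r_T\downarrow 0$, $\bar{r}_{T,t}\downarrow 0$ monotonically for every $t\in[0,\tau]$, so $I_{0,\tau}(z)\downarrow 0$ and $\exp(-I_{0,\tau}(z))\uparrow 1$ pointwise in $z$; since the quadrature is exact on the constant polynomial $1$, $\sum_{k=1}^{n}w_{n,k}=1$, giving $\przybl{B_n}\to 1$. Symmetrically, as $r_T\to+\infty$, $\bar{r}_{T,t}\uparrow+\infty$ monotonically and $I_{0,\tau}(z)\uparrow+\infty$, so each term of the finite sum goes to $0$ and $\przybl{B_n}\to 0$. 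There is no real obstacle here: the argument is a routine monotonicity-through-positive-weights chain, and the only points requiring care are (i) using $\tau>0$ together with strict positivity of $G_{0,\tau}$ and $F_{0,\tau}$ to promote monotonicity to strict monotonicity, and (ii) pointing out that the Gauss--Hermite weights are positive and normalised to $1$, so that termwise limits may be taken in the finite sum defining $\przybl{B_n}$.
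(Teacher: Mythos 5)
Your proposal is correct and follows essentially the same route as the paper: write $\przybl{B_n}$ as the finite Gauss--Hermite sum, use positivity of the weights together with strict monotonicity of $\bar{r}_{T,t}$ in $r_T$ (with $G_{0,\tau},F_{0,\tau}>0$ independent of $r_T$) for strict decrease, and then take termwise limits using $\sum_k w_{n,k}=1$. The only cosmetic difference is that you justify the limits of $I_{0,\tau}$ via monotone convergence while the paper appeals to uniformity in $t$ of the convergence of $\bar{r}_{T,t}$; both are valid.
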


\begin{proof}
By definition
\begin{equation}
\przybl{B_n}(T,T+\tau)= \mathfrak{Q}_n(\exp(-I_{0,\tau})) = \sum_{k=1}^{n} w_{n,k}\exp(-I_{0,\tau}(h_{n,k})).
\end{equation}
Since weights $w_{n,k}$ are positive, then it is sufficient to prove that $I_{0,\tau}(h_{n,k})$
is a strictly increasing function of $r_T$ for each $k$. Recalling the formulae
\begin{equation}
I_{0,\tau}(h_{n,k}) = \int_0^\tau \bar{r}_{T,t} G_{0,\tau}(t) F_{0,\tau}(t,h_{n,k})\,dt, \quad 1 \leq k \leq n,
\end{equation}
we note that $G_{0,\tau},F_{0,\tau}$ are strictly positive functions independent of $r_T$, whereas $\bar{r}_{T,t}$ is a strictly increasing function of $r_T$. Consequently, $I_{0,\tau}(h_{n,k})$ is a strictly increasing function of $r_T$.

Moreover, since $G_{0,k\tau}(\cdot), F_{0,k\tau}(\cdot,h_{n,j})$ are bounded on $[T, T+\tau]$ and convergences
\begin{equation}
\lim_{r_T \rightarrow 0}\bar{r}_{T,t} = 0, \qquad \lim_{r_T \to +\infty}\bar{r}_{T,t} = +\infty
\end{equation}
are uniform in $t \in [T,T+\tau]$, we obtain
\begin{equation}
\lim_{r_T \to 0} \exp(-I_{0,N\tau}(h_{k,j}))=1, \quad \lim_{r_T \to +\infty} \exp\big(-I_{0,N\tau}(h_{k,j})\big)=0.
\end{equation}
Since by definition
\begin{equation}
\przybl{B_n}(T,T+k\delta)=\sum_{j=1}^{n} w_{n,j}\exp\big(-I_{0,k\tau}(h_{k,j})\big)
\end{equation}
and that $w_{n,j}$ sum up to $1$, we get our assertion.
\end{proof}\qed

\section{Approximation for swaptions}
Let us consider the swaption with expiry $T>0$, strike $S>0$ and underlying swap of tenor $\tau = N\delta$, where $n \ge 1$ is the number of fixed-leg payments and $\delta > 0$ is the length of payment period. Consider the parameter $\omega$ related to the swaption type, equal $\omega = 1$ for a payer swaption and $\omega = -1$ for a receiver swaption. 
\\
\begin{notation}
Denote the stochastic discount factor and its conditional expectation in respect to $X_T$ by
\begin{equation}
\beta(t) \coloneqq \exp\left(-\int_0^tr_s\,ds\right), \quad \warunk{\beta}(x) \coloneqq \E\big(\beta(T)|X_T=x\big).
\end{equation}

Denote the values of swap annuity, underlying swap rate and two auxiliary quantities as
\begin{equation}
A(t,T,N) \coloneqq \delta \sum_{k=1}^{N}B(t,T+k\delta),
\end{equation}
\begin{equation}\label{eq:235}
r(t,T,N) \coloneqq \frac{B(t,T)-B(t,T+N\delta)}{A(t,T,N)},
\end{equation}
\begin{equation}\label{eq:12}
C(T,N,S) \coloneqq B(T,T+N\delta) + S A(T,T,N),
\end{equation}
\begin{equation}
P(T,N,S;X_T) \coloneqq \warunk{\beta}(X_T)\Big(1-C(T,N,S;X_T)\Big),
\end{equation}
where the last argument $X_T$ expresses (implicit) dependence of the variable on $X_T$ and can be omitted when not relevant. 
\end{notation}

\begin{proposition}\label{prop2}
The swaption price $Swpt$ can be expressed by the formula
\begin{equation}\label{eq:11}
Swpt = \E\Big(\omega \mathds{1}_{\{\omega \geq \omega C(T,N,S)\}}P(T,N,S;X_T)\Big).
\end{equation}
\end{proposition}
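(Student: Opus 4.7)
The plan is to reduce \eqref{eq:11} to the spot-measure pricing formula combined with a single application of the tower property, after rewriting the swaption payoff algebraically in terms of $C$.

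First I would write the payoff of the swaption at maturity as
$$A(T,T,N)\max\bigl(\omega(r(T,T,N) - S),\,0\bigr).$$
Using $B(T,T)=1$ together with the definitions \eqref{eq:235} and \eqref{eq:12}, one has
$$A(T,T,N)\bigl(r(T,T,N)-S\bigr) = 1-B(T,T+N\delta) - S\,A(T,T,N) = 1 - C(T,N,S),$$
so multiplying by $\omega$ and separating the positive part gives
$$A(T,T,N)\max\bigl(\omega(r(T,T,N) - S),\,0\bigr) = \omega\bigl(1 - C(T,N,S)\bigr)\mathds{1}_{\{\omega \geq \omega C(T,N,S)\}}.$$
The spot-measure pricing formula then yields
$$Swpt = \E\Big[\beta(T)\,\omega\bigl(1 - C(T,N,S)\bigr)\mathds{1}_{\{\omega \geq \omega C(T,N,S)\}}\Big].$$

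Next I would argue that $C(T,N,S)$ is $\sigma(X_T)$-measurable, which will allow me to extract $\warunk{\beta}(X_T)$ by conditioning. By the Markov property of $(l_t)$, hence of $(r_t)$, under \eqref{eq:123}, each bond price $B(T,T+k\delta)$ is a deterministic function of $r_T$; and by Proposition \ref{prop1}, $r_T = \bar{r}_{0,T}\exp(\sigma X_T)$ is a strictly increasing deterministic function of $X_T$. Hence $C(T,N,S)$ is a deterministic function of $X_T$, which I would write $C(T,N,S;X_T)$. Taking the conditional expectation given $X_T$ inside the outer expectation then gives
$$Swpt = \E\Big[\omega\bigl(1 - C(T,N,S;X_T)\bigr)\mathds{1}_{\{\omega \geq \omega C(T,N,S;X_T)\}}\,\E\bigl(\beta(T)\,\big|\,X_T\bigr)\Big],$$
and identifying $\warunk{\beta}(X_T) = \E(\beta(T)\mid X_T)$ and $P(T,N,S;X_T) = \warunk{\beta}(X_T)\bigl(1 - C(T,N,S;X_T)\bigr)$ reproduces \eqref{eq:11}.

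The only non-cosmetic ingredient is the measurability argument for $C(T,N,S)$; everything else is standard algebraic identification of the swap payoff with $1-C$ together with the tower property. I do not expect any analytical difficulty, since no limits, moment bounds, or convergence statements are involved.
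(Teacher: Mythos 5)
Your proposal is correct and follows essentially the same route as the paper: rewrite the intrinsic value $A(T,T,N)\big(\omega(r(T,T,N)-S)\big)^+$ as $\big(\omega(1-C(T,N,S))\big)^+$ using the definitions of $r(T,T,N)$ and $C(T,N,S)$, then apply the tower property conditioning on $X_T$, using that $C(T,N,S)$ is $\sigma(X_T)$-measurable. You merely spell out the algebra and the measurability argument (via the Markov property and $r_T=\bar{r}_{0,T}\exp(\sigma X_T)$) that the paper states more tersely.
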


\begin{proof}
Pricing the swaption under a spot measure as the expectation of its intrinsic value at expiry and using our notations \eqref{eq:235}-\eqref{eq:12}, we can express the theoretical price of a payer/receiver swaption as
\begin{equation}
Swpt=\E\Big(\beta(T)A(T,T,N)\Big(\omega\Big(r(T,T,N)-S\Big)\Big)^+\Big)
=\E\bigg(\beta(T)\Big(\omega\Big(1-C(T,N,S)\Big)\Big)^+\Big).\\
\end{equation}
By the tower property, this expectation can be calculated taking conditional expectation with respect to $X_T$ first. Since $C(T,N,S)$ is $X_t$-measurable this leads to our assertion.
\end{proof}\qed

To obtain the approximation for swaption price we will apply approximations to $\warunk{\beta}(X_T)$ and $C(T,N,S)$ in \eqref{eq:11}. Let us start with approximating $\warunk{\beta}(X_T)$. \\

\begin{notation}
Let $m \geq 1$. We define the following functions
\begin{equation}	
\warunk{F_{m,T}}(t,x,z_1,\ldots,z_m) \coloneqq \exp\bigg(\sigma \sum_{k=1}^{m}\sqrt{\warunk{\lambda}_{k}(T)}\,\warunk{f}_{k,T}(t)\,z_k +\sigma\frac{K(t,T)}{V(T)}x\bigg),
\end{equation}

\begin{equation}
\warunk{G_{m,\tau}}(t) \coloneqq \exp\left(\frac{\sigma^2}{2}\Big(\warunk{V}(t)- \sum_{k=1}^{m}\warunk{\lambda}_{k}(T)\warunk{f}_{k,T}(t)^2\Big) \right),
\end{equation}

\begin{equation}
\warunk{I_{m,T}}(x,z_1,\ldots,z_m) \coloneqq \int_0^\tau \bar{r}_{T,t} \warunk{G_{m,\tau}}(t) \warunk{F_{m,\tau}}(t,x,z_1,\ldots,z_m))\,dt.
\end{equation}

\end{notation}

\begin{aprox}\label{ap1}
The sequence
\begin{equation}
\warunk{\beta_m}(x) \coloneqq \frac{1}{(\sqrt{2\pi})^{m}}\int_{-\infty}^{+\infty} \ldots \int_{-\infty}^{+\infty} \exp \left( -\warunk{I_{m,T}}(x,z_1,\ldots,z_m) -\frac{z_1^2 + \ldots + z_m^2}{2} \right) \,dz_1 \ldots dz_m
\end{equation}
provides approximation
\begin{equation}
\warunk{\beta}(x) = \lim_{m \rightarrow +\infty} \warunk{\beta_m}(x).
\end{equation}
\end{aprox}

\begin{proof}
The proof is analogous to that of Approximation \ref{ap7}, modified so that all random variables are replaced with their expectations conditional on $X_T$. All functions and constants are replaced with their corresponding "hat" versions, dependent on $x$. Such analogy is based on equation \eqref{eq:102} and the fact the Ornstein-Uhlenbeck bridge and all $\warunk{Z_k}$ are independent of $X_T$. 
\end{proof}\qed

\begin{aprox}\label{ap2}
The sequence
\begin{equation}
\przybl{\warunk{\beta_m}}(x) \coloneqq \mathfrak{Q}_m(\exp(-\warunk{I_{m,T}})(x,\cdot)), \quad m\geq 1
\end{equation}
approximates $\warunk{\beta_0}(x)$
\begin{equation}
\warunk{\beta_0}(x)=\lim_{m\to +\infty} \przybl{\warunk{\beta_m}}(x).
\end{equation}
\end{aprox}
\begin{proof}
The proof is a direct analogue of that presented in Approximation \ref{ap8}.
\end{proof}\qed

\begin{notation}
For $m,n \geq 1 $ let us denote
\begin{equation}
P_{m,n}(T,N,S) \equiv P_{m,n}(T,N,S;X_T) \coloneqq \przybl{\warunk{\beta_m}}(X_T)\Big(1-\przybl{C_n}(T,N,S;X_T)\Big),
\end{equation}
\begin{equation}
\przybl{C_n}(T,N,S;X_T) \coloneqq \przybl{B_n}(T,T+N\delta)+S\delta \sum_{k=1}^{N}\przybl{B_n}(T,T+k\delta),
\end{equation}
where the last argument $X_T$ expresses (implicit) dependence on $X_T$, which is clear from the fact that each $\przybl{B_n}(T,T+k\delta)$ is a function of $r_T$ and $r_T=\bar{r}_{0,T}\exp(\sigma X_T)$ ($\bar{r}_{0,T}$ is a constant).
\end{notation}

\begin{proposition}\label{prop3}
For any $m,n \geq 1$ there exist $x_0=x_0(m,n)$ such that $P_{m,n}(T,N,S;\cdot)$ is negative on $(-\infty, x_0)$ and positive on $(x_0, +\infty)$.
\end{proposition}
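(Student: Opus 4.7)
My plan is to factor $P_{m,n}$ into its two constituents and analyse their signs separately. First, I would observe that $\przybl{\warunk{\beta_m}}(x)$ is strictly positive for every $x \in \Real$: by Approximation~\ref{ap2} it equals a Gauss--Hermite quadrature with positive weights $w_{m,k}$ applied to the strictly positive integrand $\exp(-\warunk{I_{m,T}}(x,\cdot))$, hence a positive linear combination of positive numbers. Consequently, the sign of $P_{m,n}(T,N,S;x)$ coincides with the sign of $1-\przybl{C_n}(T,N,S;x)$ and the problem reduces to analysing this latter factor.

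Next I would study $\przybl{C_n}(T,N,S;X_T)$ as a function of $X_T$. Since $r_T = \bar{r}_{0,T}\exp(\sigma X_T)$ is a continuous, strictly increasing bijection of $\Real$ onto $(0,+\infty)$, it suffices to view each $\przybl{B_n}(T,T+k\delta)$ as a function of $r_T$ and invoke Proposition~\ref{prop11}, which states that each such $\przybl{B_n}$ is continuous and strictly decreasing in $r_T$, tending to $1$ as $r_T\to 0^+$ and to $0$ as $r_T\to +\infty$. Therefore $\przybl{C_n}(T,N,S;X_T)$, being a positive linear combination of finitely many such terms, is continuous and strictly decreasing in $X_T$ and satisfies
\begin{equation*}
\lim_{X_T \to -\infty}\przybl{C_n}(T,N,S;X_T) = 1 + S\delta N, \qquad \lim_{X_T \to +\infty}\przybl{C_n}(T,N,S;X_T) = 0.
\end{equation*}

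Consequently, $x \mapsto 1-\przybl{C_n}(T,N,S;x)$ is continuous and strictly increasing, with limits $-S\delta N < 0$ at $-\infty$ and $1 > 0$ at $+\infty$. The intermediate value theorem together with strict monotonicity then yields a unique $x_0 = x_0(m,n) \in \Real$ with $\przybl{C_n}(T,N,S;x_0) = 1$, and $1-\przybl{C_n}(T,N,S;x)$ is negative on $(-\infty, x_0)$ and positive on $(x_0, +\infty)$. Multiplying by the strictly positive factor $\przybl{\warunk{\beta_m}}(x)$ preserves all signs and yields the claim.

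I do not anticipate any real obstacle, as the argument is bookkeeping layered on top of Proposition~\ref{prop11}. The only point to which I would pay a little attention is the passage to the limits $X_T \to \pm\infty$ inside the finite combination defining $\przybl{C_n}$, but this is trivial since only $N+1$ terms appear and each limit is given by Proposition~\ref{prop11}.
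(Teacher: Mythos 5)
Your argument is correct and follows essentially the same route as the paper: positivity of $\przybl{\warunk{\beta_m}}$ (positive Gauss--Hermite weights on a positive integrand), then strict monotonicity and the limits $1+N\delta S$ and $0$ of $\przybl{C_n}$ in $X_T$ via Proposition~\ref{prop11} and $r_T=\bar{r}_{0,T}\exp(\sigma X_T)$, yielding the sign change at a single $x_0$. The only cosmetic difference is that you invoke continuity and the intermediate value theorem explicitly, whereas the paper gets the crossing point from monotonicity and the limits alone; both are fine.
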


\begin{proof} 
Note that $r_T$ is an increasing function of $x_T$ and 
\begin{equation}
\lim_{x_T \to -\infty}r_T = 0, \qquad \lim_{x_T \to +\infty}r_T = +\infty.
\end{equation}
Therefore using Proposition \ref{prop11} we observe that $\przybl{C_n}(T,N,S;x)$ is a strictly decreasing function of $X_T$ and
\begin{equation}
\lim_{x \to +\infty} \przybl{C_n}(T,N,S;x)=0, \quad \lim_{x \to -\infty} \przybl{C_n}(T,N,S;x)=1 + N\delta S>1.
\end{equation}
Hence there exists finite $x_0$ such that
\begin{equation}
\przybl{C_{n}}(T,N,S;x)  \geq 1, \quad x \leq x_0 \quad \mathrm{and} \quad
\przybl{C_{n}}(T,N,S;x)  \leq 1, \quad x \geq x_0,
\end{equation}
which concerning $\przybl{\hat{\beta}_m}(X_T) > 0$, is equivalent to our assertion.
\end{proof}\qed

\begin{proposition}\label{prop4}
For any $m,n \geq 1 $ the price $\textrm{Swpt} $ of the swaption can be approximated with the formula
\begin{equation}
\textrm{Swpt} \approx \przybl{\textrm{Swpt}_{m,n}} \coloneqq \E\Big( \omega \mathds{1}_{\{\omega X_T \geq \omega x_0 \}} P_{m,n}(T,N,S; X_T) \Big).
\end{equation}
\end{proposition}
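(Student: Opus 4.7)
The plan is to start from the exact representation
$$\textrm{Swpt}=\E\bigl(\omega\,\mathds{1}_{\{\omega\geq\omega C(T,N,S)\}}\,P(T,N,S;X_T)\bigr)$$
obtained in Proposition \ref{prop2} and to replace, one at a time, the two analytically intractable ingredients $\warunk{\beta}(X_T)$ and $C(T,N,S;X_T)$ by the tractable approximations $\przybl{\warunk{\beta_m}}(X_T)$ and $\przybl{C_n}(T,N,S;X_T)$ supplied by Approximations \ref{ap2} and \ref{ap8}. This substitution turns the integrand $P(T,N,S;X_T)$ into exactly $P_{m,n}(T,N,S;X_T)$, as defined immediately before Proposition \ref{prop3}.

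The second step is to rewrite the indicator set in a more convenient form. Since $\warunk{\beta}(X_T)>0$ almost surely, the event $\{\omega\geq\omega C(T,N,S)\}$ coincides with $\{\omega P(T,N,S;X_T)\geq 0\}$; that is, it simply restricts the expectation to the in-the-money region where the payoff is non-negative. The natural approximate counterpart is therefore $\{\omega P_{m,n}(T,N,S;X_T)\geq 0\}$, and this is the set I would insert into the approximate formula.

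The key step is then to identify that set as $\{\omega X_T\geq\omega x_0\}$. This is exactly the content of Proposition \ref{prop3}, which guarantees that $P_{m,n}(T,N,S;\cdot)$ has a single sign change from negative to positive at the point $x_0=x_0(m,n)$. Splitting into the two cases $\omega=\pm 1$, one immediately obtains $\{\omega P_{m,n}(T,N,S;X_T)\geq 0\}=\{\omega X_T\geq\omega x_0\}$, and substituting this set into the approximated expectation yields precisely $\przybl{\textrm{Swpt}_{m,n}}$.

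The only substantive ingredient is the single-crossing property of Proposition \ref{prop3}, which in turn rests on the strict monotonicity of each $\przybl{B_n}(T,T+k\delta)$ in $r_T$ established in Proposition \ref{prop11}; without such a monotonicity statement one could not reduce the random in-the-money region to a half-line in $X_T$. Since both of these results are already available, no further technical obstacle is anticipated and the derivation reduces to the direct substitution described above.
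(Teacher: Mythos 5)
Your proposal is correct and follows essentially the same route as the paper: substitute the approximations $\przybl{\warunk{\beta_m}}$ and $\przybl{C_n}$ into the exact representation of Proposition \ref{prop2}, then use the single sign change of $P_{m,n}(T,N,S;\cdot)$ from Proposition \ref{prop3} to rewrite the exercise region as $\{\omega X_T \geq \omega x_0\}$. If anything, your phrasing of the indicator step in terms of $P_{m,n}$ (rather than $P$) is the cleaner reading of what Proposition \ref{prop3} actually delivers.
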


\begin{proof}
Approximations \ref{ap7} - \ref{ap2} directly imply that
\begin{equation}
\begin{split}
P(T,N,S; X_T) &\approx 
\warunk{\beta_0}(X_T)\big(1-B_0(T,T+N\delta)-S\delta \sum_{k=1}^{N}B_0(T,T+k\delta)\big) \\ &\approx
\przybl{\warunk{\beta_m}}(X_T)(1-\przybl{C_n}(T,N,S;X_T))=P_{m,n}(T,N,S; X_T).
\end{split}
\end{equation}
In addition, from Proposition \ref{prop3} we deduce that the condition $\omega P(T,n,S; X_T) \geq 0$ is equivalent to $\omega X_T \geq \omega x_0 $, hence finally
\begin{equation}
\textrm{Swpt}\approx \E\Big(\mathds{1}_{\{\omega X_T \geq \omega x_0 \}} P_{m,n}(T,N,S; X_T) \Big),
\end{equation}
which is our assertion
\end{proof}\qed

Now we are ready to provide a tractable formula for approximate swaption pricing. To this end we simply replace $\przybl{\textrm{Swpt}_{m,n}}$ in the last approximation with its specific Lagrange interpolating polynomials.

\begin{aprox}\label{ap5}
For $m,n,k \geq 1$ denote
\begin{equation}
f_{m,n}: f_{m,n}(z) = P_{m,n}(T,N,S; \sqrt{V(T)} z),
\end{equation}
\begin{equation}
\przybl{f_{m,n,k}} \coloneqq \mathfrak{L}_{k}(f_{m,n}).
\end{equation}
Take any $k,l \geq 1$ such that
\begin{equation}
f_{m,n}(h_{k,l}) < 0 \leq f_{m,n}(h_{k,l+1})
\end{equation}
and $\przybl{z_k} \in (h_{k,l},h_{k,l+1}]$ such that
\begin{equation}
\przybl{f_{m,n,k}}(\przybl{z_k})=0.
\end{equation}
Then the sequence
\begin{equation}
\przybl{Swpt_{m,n,k}} \coloneqq \frac{1}{\sqrt{2\pi}} \left\{ \left[ \mathfrak{H}(\przybl{f_{m,n,k}})\right](\przybl{z_k}) \exp \left(-\frac{1}{2}\przybl{z_k}^2 \right) + \omega \mathfrak{h} (\przybl{f_{m,n,k}}) \Phi(-\omega\przybl{z_k})\right\}
\end{equation}
provides approximation
\begin{equation}
\przybl{Swpt_{m,n}}=\lim_{k \rightarrow +\infty} \przybl{Swpt_{m,n,k}}.
\end{equation}
In particular, the swaption price $\textrm{Swpt}$ can be approximated as
\begin{equation}
Swpt \approx \przybl{Swpt_{m,n,k}}.
\end{equation}
\end{aprox}

\begin{proof}
First of all note that required $k,l,\przybl{z_k}$ exist. Indeed 
\begin{equation}
-h_{k,1} = h_{k,k} \to +\infty, \quad k \to +\infty.
\end{equation} 
Hence by Proposition \ref{prop3} 
\begin{equation}
f_{m,n}(h_{k,1}) < 0 < f_{m,n}(h_{k,k})
\end{equation}
for almost all $k$ and 
\begin{equation}
f_{m,n}(h_{k,l}) < 0 \leq f_{m,n}(h_{k,l+1})
\end{equation}
for a specific $l$. Refering to the definition of $\przybl{f_{m,n,k}}$, this also means
\begin{equation}
\przybl{f_{m,n}}(h_{k,l}) < 0 \leq \przybl{f_{m,n}}(h_{k,l+1}),
\end{equation}
hence $\przybl{f_{m,n,k}}$ has a root $\przybl{z_k}$ in $(h_{k,l},h_{k,l+1}]$.

By Proposition \ref{prop4} and Lemma \ref{lem55} we have respectively
\begin{equation}
\przybl{Swpt_{m,n}} = \E\left(\omega \mathds{1}_{\{\omega Z \geq \omega z_0 \}} f(Z)\right),
\end{equation}
\begin{equation}
\przybl{Swpt_{m,n,k}} =  \frac{1}{\sqrt{2\pi}} \int_{\przybl{z_k}}^{\omega\infty} \przybl{f_{m,n,k}}(z) e^{-\frac{1}{2}z^2} dz=\E\left(\omega\mathds{1}_{\{\omega Z\geq \omega\przybl{z_k}\}}\przybl{f_{m,n,k}}(Z)\right),
\end{equation}
where $Z \sim N(0,1)$, $z_0  \coloneqq \frac{x_0(m,n)}{\sqrt{V(T)}}.$ 
Then denoting
\begin{equation}
\epsilon_k \coloneqq \E\left(\omega \left(\mathds{1}_{\{\omega Z \geq \omega z_0 \}}-\mathds{1}_{\{\omega Z\geq \omega \przybl{z_k}\}}\right) f_{m,n}(Z)\right),
\end{equation}
\begin{equation}
\delta_k \coloneqq \E\left(\omega \mathds{1}_{\{\omega Z\geq \omega\przybl{z_k}\}}(f_{m,n}(Z)-\przybl{f_{m,n,k}}(Z))\right),
\end{equation}
we have
\begin{equation}\label{eq:333}
\przybl{Swpt_{m,n}} - \przybl{Swpt_{m,n,k}} = \epsilon_k + \delta_k.
\end{equation}
We will show that $\epsilon_k, \delta_k \to 0$. First note that
\begin{equation}
|\mathbb{P}(\omega Z \geq \omega z_0) - \mathbb{P}(\omega Z\geq \omega\przybl{z_k})| = |\Phi(z_0) - \Phi(\przybl{z_k})| \leq \frac{1}{\sqrt{2\pi}} |z_0 - \przybl{z_k}|.
\end{equation}
Additionally, from elaborations in Proposition \ref{prop3}
\begin{equation}
|P_{m,n}(T,N,S;X_T)| < \max\{1,N\delta S\} < 1 + N\delta S,
\end{equation}
hence $f_{m,n} \leq 1 + N\delta S$. Combining those together
\begin{equation}
|\epsilon_k| \leq \frac{1}{\sqrt{2\pi}} (1+N\delta S)|z_0 - \przybl{z_k}|.
\end{equation}
By definition of $x_0$ from Proposition \ref{prop3}, we note that $z_0 \in (h_{k,l},h_{k,l+1}]$. Taking in account that $\przybl{z_k}$ also lays in this interval, we have
\begin{equation}
|\epsilon_k| \leq \frac{1}{\sqrt{2\pi}} (1+N\delta S)(h_{k,l+1}-h_{k,l})
\end{equation}
and noting that $h_{k,l+1}-h_{k,l} \rightarrow 0$ as $k \to +\infty$
(see Theorem 6.1.2 \cite{szego}) we get $\epsilon_k \rightarrow 0$.\\
Now, let us observe that
\begin{equation}
\delta^2 \leq \E\left|f(Z)-\przybl{f_{m,n,k}}(Z)\right| \leq \lVert f-\przybl{f_{m,n,k}}\rVert,
\end{equation}
where $\lVert \cdot \rVert$ denotes the norm in $\mathbb{L}^2\left(\Real,\frac{1}{\sqrt{2\pi}}e^{-\frac{1}{2}x^2}dx\right)$.
As $f_{m,n}$ is a continuous and bounded function, thus by Lemma \ref{lem666} it holds $\lVert\przybl{f_{m,n,k}} - f_{m,n} \rVert \to 0$ as $k \to +\infty$, hence $\delta_k \to 0$.
Finally, since $\epsilon_k + \delta_k \rightarrow 0$, then taking into account Proposition \ref{prop4} and \eqref{eq:333}, we finally get our assertion.
\end{proof}\qed

\begin{remark}
Despite some complexity in the approximation formulas obtained, they are easily computable in practice. Namely, proper $l$ can be found by evaluating \\ $\przybl{P_{m,n}}(T,N,S;\sqrt{V(T)} h_j)$ at zeros of Hermite polynomial ($j=1,\ldots,k$), which can be easily done numerically. Those same values are also used for calculating the coefficients of polynomial $\przybl{f_{m,n,k}}$, specifically by solving a set of linear equations. Afterwards, coefficients of the polynomial $\mathfrak{H}(\przybl{f_{m,n,k}})$ and the value $\mathfrak{h}(\przybl{f_{m,n,k}})$ are directly computable by using recursive formulas given by their definition. Finally, the root $\przybl{z_k}$ of $\przybl{f_{m,n,k}}$ can be effectively found with any standard numerical procedure (such as Newton's method or the false position method). The only numerically extensive element here is the evaluation of $\przybl{P_{m,n}}$, which requires $k(Nn + m)$ numerical integrations ($k$ arguments $h_j$, $N$ maturities of bonds approximations, each consisting of $n$ components of quadrature, plus $m$ nodes of quadrature for approximation of $\warunk{\beta}$). 
\end{remark}

\begin{remark}
For practical applications, one should take some specific $k, m, n$ values. In typical situations we recommend using  $k = m = n = 5$, which proved to be accurate enough in our numerical tests and require a moderate number of numerical integrations.
\end{remark}

\begin{remark}\label{rem:13}
Note that for exact quantities it holds
\begin{equation}
\E\Big( \warunk{\beta}(X_T) \Big(1-C(T,n,S)\Big)\Big) = 1-C(0,n,S).
\end{equation}
However, it does not remain true if we substitute $\warunk{\beta}(X_T)$ and $C(T,n,S)$ with their approximations applied in Approximation \ref{ap5}. Consequently, our approximations do not obey the put-call parity exactly.
\end{remark}

\section{Numerical results}

In order to test the accuracy of the approximations presented in previous sections, several prices of zero coupon bonds have been computed. For simplicity $a(\cdot)$ was assumed to be a constant function of time of the form $a(t)=b\ln(r_{\mathrm{avg}})$, where $r_{\mathrm{avg}}=3\%$. For better clarity and comparability, results are presented in the form of yields-to-maturity (with a continuous compounding convention), not actual prices of bonds. In order to examine the dependence of results on different parameters, yields-to-maturity were calculated for the following:
\begin{itemize}
\item[$\bullet$]maturities: 1, 2, 5, 10 and 20 years
\item[$\bullet$]values of $r_0$: 1\%, 3\% and 6\%
\item[$\bullet$]values of $b$: 0.02 and 0.1
\item[$\bullet$]values of $\sigma$: 25\% and 50\%
\end{itemize}

In Table 1 we present results of such calculations, obtained from Approximation \ref{ap8}, benchmarked to the exact results obtained via Monte-Carlo simulations. One can see, the errors of approximations are very small, in the order of at most a few basis points for every set of parameters examined.

Similar numerical tests were performed for swaption prices. Several prices of payer and receiver swaptions have been computed, for various sets of parameters of the model and the swaption payoff. We used the same sets of model parameters as for bonds. For each of them, we examined the accuracy of approximations for various swaptions, using parameters as follows:
\begin{itemize}
\item[$\bullet$]swaption expiries: 1, 2, 5 and 10 years,
\item[$\bullet$]underlying tenors: 1, 2, 5 and 10 years,
\item[$\bullet$]swaption moneyness: 10\%, 80\%, 90\%, 100\%, 110\%, 125\%, 150\%.
\end{itemize}
(moneyness being a quotient of the swaption Strike and ATM strike as the forward swap rate). Due to the dimensionality of the parameter space, we present results in 2 layers:
\begin{itemize}
\item [$\bullet$] dependence on moneyness for a given tenor (Table 2),
\item [$\bullet$] dependence on underlying tenor for ATM strike (Table 3).
\end{itemize}
In the case of strikes other than ATM, only out-of-the-money swaptions were concerned, i.e. the price of payer or receiver swaption was calculated depending on whether its strike was above or below the forward swap rate (such restriction can be imposed without loss of generality, because of the put-call parity).

As in the case of bonds, prices calculated using Approximation \ref{ap5} were compared to "exact" prices (calculated on the lattice). In addition to such checks, we examined the scale of put-call disparity resulting from approximation, as mentioned above in Remark \ref{rem:13}. To this end, in Table 4 we compared prices of ATM-payer and ATM-receiver swaptions (which in principle should be equal) obtained from Approximation \ref{ap5}. To allow for easier comparison of results for different swaptions, and to stay compatible with market conventions, all swaption prices were translated into their implied volatilities. Hence figures in tables represent differences between implied volatilities corresponding to compared swaption prices (approximate vs exact in the case of Tables 2 and 3 or payer vs receiver in Table 4).

As you can see, with only a few exceptions, errors of approximations range from -50 bp to +50 bp, where the vast majority are less than 10 bp in terms of absolute value, which is far below a typical bid-offer spread. Not surprisingly, the biggest errors are observed in the case of long expiry/tenor and/or high volatility. Similar observations address the put-call disparity, which appears negligible except in cases of the longest expiries and/or tenors, which reflect weaker efficiency of approximations when applied to longer time horizons.

Finally, we compared our approximations for bonds with those obtained via methods proposed in \cite{tourrucoo},\cite{capriotti2}. We used Table 1 from \cite{capriotti2} herein, which contains approximations of zero-coupon bond prices obtained with both those methods as well as from Monte Carlo simulations, as benchmark values. Calculations have been conducted for specific sets of parameters, namely:
\begin{itemize}
\item[$\bullet$]maturities: 0.1, 0.5, 1, 2 and 3 years,
\item[$\bullet$]$r_{0}=6\%,b=\ln(0.04), \sigma=85\%$.
\end{itemize}

Table 5 includes these results complemented with prices obtained from our Approximation \ref{ap8}. However, in order to keep our convention, we converted bond prices to their yields and present the results in the form of a difference vs benchmark (MC). The results in Table 5 reveal moderately good performance of Approximation \ref{ap8} in comparison to other approaches. Most importantly, it maintains a rather stable error rate while increasing bond maturity, whereas results from Ref [16] exhibit very different behaviour, with errors increasing strongly with maturity.
\newpage

\small
\begin{table}[H]
\caption{Yield-to-maturities obtained by Monte Carlo simulations (MC)
and Approximation \ref{ap8} (A2).}
\begin{center}
\begin{tabular}{lllr<{\kern1.7em}rr>$r<$}
\toprule
\multicolumn3c{Model Parameters}&\multicolumn1c{\multirow2*{Maturity}}&\multicolumn3c{Yield}\\\cmidrule(r){1-3}\cmidrule(l){5-7}
\multicolumn1c{$r_0$}&\multicolumn1c{$b$}&\multicolumn1c{$\sigma$}
&&\multicolumn1c{MC}&\multicolumn1c{A2}&\multicolumn1c{Error}\\
\midrule
&	&	&	1	&	1.071	\%&	1.071	\%&	0.000	\%\\
&	&	&	2	&	1.142	\%&	1.142	\%&	0.000	\%\\
1\%	&0.1	&25\%	&5	&	1.350	\%&	1.350	\%&	0.000	\%\\
&	&	&	10	&	1.663	\%&	1.663	\%&	0.000	\%\\
&	&	&	20	&	0.832	\%&	0.832	\%&	0.000	\%\\\midrule
&	&	&	1	&	3.043	\%&	3.043	\%&	0.000	\%\\
&	&	&	2	&	3.080	\%&	3.080	\%&	0.000	\%\\
3\%	&0.1	&25\%	&5	&	3.159	\%&	3.159	\%&	0.000	\%\\
&	&	&	10	&	3.221	\%&	3.222	\%&	0.001	\%\\
&	&	&	20	&	1.610	\%&	1.611	\%&	0.001	\%\\\midrule
&	&	&	1	&	5.885	\%&	5.885	\%&	0.000	\%\\
&	&	&	2	&	5.769	\%&	5.769	\%&	0.000	\%\\
6\%	&0.1	&25\%	&5	&	5.435	\%&	5.436	\%&	0.001	\%\\
&	&	&	10	&	4.971	\%&	4.975	\%&	0.004	\%\\
&	&	&	20	&	2.485	\%&	2.487	\%&	0.002	\%\\\midrule
&	&	&	1	&	1.027	\%&	1.027	\%&	0.000	\%\\
&	&	&	2	&	1.053	\%&	1.053	\%&	0.000	\%\\
1\%	&0.02	&25\%	&5	&	1.134	\%&	1.134	\%&	0.000	\%\\
&	&	&	10	&	1.264	\%&	1.264	\%&	0.000	\%\\
&	&	&	20	&	0.632	\%&	0.632	\%&	0.000	\%\\\midrule
&	&	&	1	&	3.046	\%&	3.046	\%&	0.000	\%\\
&	&	&	2	&	3.089	\%&	3.089	\%&	0.000	\%\\
3\%	&0.02	&25\%	&5	&	3.203	\%&	3.203	\%&	0.000	\%\\
&	&	&	10	&	3.331	\%&	3.333	\%&	0.001	\%\\
&	&	&	20	&	1.666	\%&	1.666	\%&	0.001	\%\\\midrule
&	&	&	1	&	6.048	\%&	6.048	\%&	0.000	\%\\
&	&	&	2	&	6.086	\%&	6.086	\%&	0.000	\%\\
6\%	&0.02	&25\%	&5	&	6.145	\%&	6.146	\%&	0.001	\%\\
&	&	&	10	&	6.075	\%&	6.081	\%&	0.006	\%\\
&	&	&	20	&	3.038	\%&	3.041	\%&	0.003	\%\\\midrule
&	&	&	1	&	1.120	\%&	1.120	\%&	0.000	\%\\
&	&	&	2	&	1.243	\%&	1.243	\%&	0.000	\%\\
1\%	&0.1	&50\%	&5	&	1.607	\%&	1.607	\%&	0.000	\%\\
&	&	&	10	&	2.104	\%&	2.107	\%&	0.003	\%\\
&	&	&	20	&	1.052	\%&	1.053	\%&	0.001	\%\\\midrule
&	&	&	1	&	3.178	\%&	3.178	\%&	0.000	\%\\
&	&	&	2	&	3.336	\%&	3.336	\%&	0.000	\%\\
3\%	&0.1	&50\%	&5	&	3.668	\%&	3.670	\%&	0.002	\%\\
&	&	&	10	&	3.872	\%&	3.882	\%&	0.009	\%\\
&	&	&	20	&	1.936	\%&	1.941	\%&	0.005	\%\\\midrule
&	&	&	1	&	6.137	\%&	6.137	\%&	0.000	\%\\
&	&	&	2	&	6.215	\%&	6.216	\%&	0.001	\%\\
6\%	&0.1	&50\%	&5	&	6.174	\%&	6.181	\%&	0.006	\%\\
&	&	&	10	&	5.747	\%&	5.774	\%&	0.026	\%\\
&	&	&	20	&	2.874	\%&	2.887	\%&	0.013	\%\\
\bottomrule
\end{tabular}
\end{center}
\end{table}

\small
\begin{table}[H]
\caption{Differences between implied volatilities corresponding to swaptions prices calculated using Approximation \ref{ap5} and pricing on lattice, calculated for various moneyness levels.}
\begin{center}
\newcommand\mc{\multicolumn1c}
\newcommand\mr{\multirow4*}
\begin{tabular}{lllr<{\kern.5em}r<{\kern.5em}r<{\kern.5em}*{8}{>$r<$}}
\toprule
\multicolumn3c{\multirow2*{Model Parameters}}&\multicolumn3c{\multirow2*{Swaption}}&
\multicolumn8c{Implied volatility error vs moneyness (\%ATMF)}\\
\cmidrule(l){7-14}
&&&&&&
\multicolumn4c{RECEIVER} & \multicolumn4c{PAYER}\\
\cmidrule(r){1-3}\cmidrule(lr){4-6}\cmidrule(lr){7-10}\cmidrule(l){11-14}
\mc{$r_0$}&\mc{$b$}&\mc{$\sigma$}
&\mc{Mat.}&\mc{Ten.}&\mc{Fwd IRS}&\mc{70\%}&\mc{80\%}&\mc{90\%}&\mc{100\%}&\mc{100\%}&\mc{110\%}&\mc{125\%}&\mc{150\%}\\\midrule
\mr{1.0\%}&\mr{0.10}&\mr{25\%}&1&1&1.22\%&0.00\%&0.00\%&0.00\%&0.00\%&0.00\%&0.00\%&0.00\%&0.00\%\\
&&&2&2&1.43\%&0.00\%&0.00\%&0.00\%&0.00\%&0.00\%&0.00\%&0.00\%&0.00\%\\
&&&5&5&1.99\%&0.00\%&0.00\%&0.01\%&0.01\%&0.01\%&0.01\%&0.00\%&0.00\%\\
&&&10&10&2.60\%&-0.01\%&0.00\%&0.01\%&0.01\%&0.02\%&0.02\%&0.01\%&0.00\%\\\midrule
\mr{3.0\%}&\mr{0.10}&\mr{25\%}&1&1&3.17\%&0.00\%&0.00\%&0.00\%&0.00\%&0.00\%&0.00\%&0.00\%&0.00\%\\
&&&2&2&3.25\%&0.00\%&0.00\%&0.00\%&0.00\%&0.00\%&0.00\%&0.00\%&0.00\%\\
&&&5&5&3.34\%&-0.01\%&0.00\%&0.01\%&0.01\%&0.02\%&0.02\%&0.01\%&-0.01\%\\
&&&10&10&3.31\%&-0.01\%&0.00\%&0.00\%&0.00\%&0.03\%&0.03\%&0.02\%&0.01\%\\\midrule
\mr{6.0\%}&\mr{0.10}&\mr{25\%}&1&1&5.82\%&0.00\%&0.00\%&0.00\%&0.00\%&0.00\%&0.00\%&0.00\%&0.00\%\\
&&&2&2&5.46\%&-0.01\%&0.00\%&0.00\%&0.00\%&0.01\%&0.01\%&0.00\%&0.00\%\\
&&&5&5&4.62\%&-0.01\%&0.00\%&0.01\%&0.01\%&0.03\%&0.02\%&0.01\%&0.00\%\\
&&&10&10&3.86\%&-0.01\%&-0.01\%&-0.01\%&-0.01\%&0.03\%&0.03\%&0.03\%&0.02\%\\\midrule
\mr{1.0\%}&\mr{0.02}&\mr{25\%}&1&1&1.09\%&0.00\%&0.00\%&-0.01\%&-0.01\%&0.00\%&0.00\%&0.00\%&0.00\%\\
&&&2&2&1.17\%&0.00\%&0.00\%&0.00\%&0.00\%&0.00\%&0.00\%&0.00\%&0.00\%\\
&&&5&5&1.40\%&0.01\%&0.02\%&0.02\%&0.01\%&0.02\%&0.01\%&0.00\%&-0.01\%\\
&&&10&10&1.72\%&0.11\%&0.12\%&0.11\%&0.10\%&0.12\%&0.09\%&0.05\%&-0.01\%\\\midrule
\mr{3.0\%}&\mr{0.02}&\mr{25\%}&1&1&3.18\%&0.00\%&0.00\%&0.00\%&0.00\%&0.00\%&0.00\%&0.00\%&0.00\%\\
&&&2&2&3.30\%&0.00\%&0.00\%&0.00\%&0.00\%&0.00\%&0.00\%&0.00\%&0.00\%\\
&&&5&5&3.52\%&0.03\%&0.05\%&0.05\%&0.04\%&0.06\%&0.05\%&0.02\%&-0.02\%\\
&&&10&10&3.51\%&0.01\%&0.02\%&0.02\%&0.02\%&0.07\%&0.06\%&0.06\%&0.05\%\\\midrule
\mr{6.0\%}&\mr{0.02}&\mr{25\%}&1&1&6.32\%&0.00\%&0.00\%&0.00\%&0.00\%&0.00\%&0.00\%&0.00\%&0.00\%\\
&&&2&2&6.38\%&0.00\%&0.00\%&0.01\%&0.01\%&0.01\%&0.01\%&0.00\%&-0.01\%\\
&&&5&5&6.20\%&0.01\%&0.03\%&0.03\%&0.03\%&0.06\%&0.05\%&0.04\%&0.01\%\\
&&&10&10&5.34\%&-0.07\%&-0.13\%&-0.17\%&-0.18\%&-0.13\%&-0.12\%&-0.10\%&-0.02\%\\\midrule
\mr{1.0\%}&\mr{0.10}&\mr{50\%}&1&1&1.38\%&-0.02\%&-0.03\%&-0.04\%&-0.03\%&-0.02\%&-0.01\%&0.00\%&0.02\%\\
&&&2&2&1.75\%&-0.02\%&-0.02\%&-0.02\%&-0.01\%&-0.01\%&-0.01\%&0.00\%&0.01\%\\
&&&5&5&2.63\%&0.23\%&0.23\%&0.21\%&0.17\%&0.23\%&0.18\%&0.10\%&-0.02\%\\
&&&10&10&3.26\%&0.04\%&0.08\%&0.09\%&0.07\%&0.27\%&0.25\%&0.21\%&0.14\%\\\midrule
\mr{3.0\%}&\mr{0.10}&\mr{50\%}&1&1&3.55\%&0.00\%&-0.01\%&-0.01\%&-0.01\%&0.00\%&0.00\%&0.00\%&0.01\%\\
&&&2&2&3.90\%&0.07\%&0.08\%&0.07\%&0.06\%&0.08\%&0.06\%&0.02\%&-0.03\%\\
&&&5&5&4.16\%&0.23\%&0.26\%&0.25\%&0.22\%&0.32\%&0.27\%&0.20\%&0.07\%\\
&&&10&10&3.93\%&-0.06\%&-0.06\%&-0.07\%&-0.08\%&0.16\%&0.15\%&0.16\%&0.18\%\\\midrule
\mr{6.0\%}&\mr{0.10}&\mr{50\%}&1&1&6.49\%&0.01\%&0.02\%&0.02\%&0.01\%&0.03\%&0.02\%&0.01\%&-0.01\%\\
&&&2&2&6.43\%&0.12\%&0.15\%&0.14\%&0.12\%&0.15\%&0.12\%&0.06\%&-0.03\%\\
&&&5&5&5.49\%&0.10\%&0.13\%&0.13\%&0.12\%&0.27\%&0.25\%&0.21\%&0.15\%\\
&&&10&10&4.38\%&-0.10\%&-0.17\%&-0.22\%&-0.24\%&0.01\%&0.01\%&0.04\%&0.13\%\\\bottomrule
\end{tabular}
\end{center}
\end{table}

\small
\begin{table}[H]
\caption{Differences between implied volatilities corresponding to swaptions prices calculated using Approximation \ref{ap5} and pricing on lattice, calculated for various swaptions expiries and underlying tenors, for ATM strike.}
\begin{center}
\begin{tabular}{lllr<{\kern1em}*{4}{>$r<$}}
\toprule
\multicolumn3c{Model Parameters}
&\multicolumn1c{\multirow2*{\shortstack{Tenor\\Expiry}}}
&\multicolumn4c{ATMF Payer Volatility Error}\\
\cmidrule(r){1-3}\cmidrule(l){5-8}
\multicolumn1c{$r_0$}&\multicolumn1c{$b$}&\multicolumn1c{$\sigma$}
&&\multicolumn1c{1Y}&\multicolumn1c{2Y}&\multicolumn1c{5Y}&\multicolumn1c{10Y}\\
\midrule
&	&	&	1	&	0.00	\%&	0.00	\%&	0.00	\%&	-0.01	\%\\
\multirow2*{1\%}&	\multirow2*{0.1}&	\multirow2*{25\%}&	2	&	0.00	\%&	0.00	\%&	0.00	\%&	-0.01	\%\\
&	&	&	5	&	0.00	\%&	0.00	\%&	0.01	\%&	0.00	\%\\
&	&	&	10	&	0.01	\%&	0.02	\%&	0.01	\%&	0.01	\%\\\midrule
&	&	&	1	&	0.00	\%&	0.00	\%&	-0.01	\%&	-0.03	\%\\
\multirow2*{3\%}&	\multirow2*{0.1}&	\multirow2*{25\%}&	2	&	0.00	\%&	0.00	\%&	0.00	\%&	-0.02	\%\\
&	&	&	5	&	0.01	\%&	0.01	\%&	0.01	\%&	0.00	\%\\
&	&	&	10	&	0.02	\%&	0.02	\%&	0.01	\%&	0.00	\%\\\midrule
&	&	&	1	&	0.00	\%&	0.00	\%&	-0.02	\%&	-0.06	\%\\
\multirow2*{6\%}&	\multirow2*{0.1}&	\multirow2*{25\%}&	2	&	0.00	\%&	0.00	\%&	-0.01	\%&	-0.04	\%\\
&	&	&	5	&	0.02	\%&	0.02	\%&	0.01	\%&	-0.01	\%\\
&	&	&	10	&	0.02	\%&	0.02	\%&	0.01	\%&	-0.01	\%\\\midrule
&	&	&	1	&	-0.01	\%&	0.00	\%&	-0.01	\%&	-0.02	\%\\
\multirow2*{1\%}&	\multirow2*{0.02}&	\multirow2*{25\%}&	2	&	-0.01	\%&	0.00	\%&	-0.01	\%&	0.00	\%\\
&	&	&	5	&	-0.01	\%&	0.00	\%&	0.01	\%&	0.02	\%\\
&	&	&	10	&	0.03	\%&	0.05	\%&	0.09	\%&	0.10	\%\\\midrule
&	&	&	1	&	0.00	\%&	0.00	\%&	-0.01	\%&	-0.04	\%\\
\multirow2*{3\%}&	\multirow2*{0.02}&	\multirow2*{25\%}&	2	&	0.00	\%&	0.00	\%&	0.00	\%&	-0.02	\%\\
&	&	&	5	&	0.02	\%&	0.03	\%&	0.04	\%&	0.02	\%\\
&	&	&	10	&	0.14	\%&	0.15	\%&	0.12	\%&	0.02	\%\\\midrule
&	&	&	1	&	0.00	\%&	0.00	\%&	-0.02	\%&	-0.08	\%\\
\multirow2*{6\%}&	\multirow2*{0.02}&	\multirow2*{25\%}&	2	&	0.00	\%&	0.01	\%&	0.00	\%&	-0.05	\%\\
&	&	&	5	&	0.05	\%&	0.06	\%&	0.03	\%&	-0.04	\%\\
&	&	&	10	&	0.11	\%&	0.08	\%&	-0.05	\%&	-0.18	\%\\\midrule
&	&	&	1	&	-0.03	\%&	-0.02	\%&	-0.01	\%&	-0.13	\%\\
\multirow2*{1\%}&	\multirow2*{0.1}&	\multirow2*{50\%}&	2	&	-0.05	\%&	-0.01	\%&	0.01	\%&	-0.07	\%\\
&	&	&	5	&	0.06	\%&	0.14	\%&	0.17	\%&	0.04	\%\\
&	&	&	10	&	0.46	\%&	0.49	\%&	0.32	\%&	0.07	\%\\\midrule
&	&	&	1	&	-0.01	\%&	0.01	\%&	-0.02	\%&	-0.20	\%\\
\multirow2*{3\%}&	\multirow2*{0.1}&	\multirow2*{50\%}&	2	&	0.02	\%&	0.06	\%&	0.04	\%&	-0.10	\%\\
&	&	&	5	&	0.31	\%&	0.35	\%&	0.22	\%&	0.01	\%\\
&	&	&	10	&	0.58	\%&	0.49	\%&	0.15	\%&	-0.08	\%\\\midrule
&	&	&	1	&	0.01	\%&	0.02	\%&	-0.05	\%&	-0.33	\%\\
\multirow2*{6\%}&	\multirow2*{0.1}&	\multirow2*{50\%}&	2	&	0.09	\%&	0.12	\%&	0.03	\%&	-0.18	\%\\
&	&	&	5	&	0.45	\%&	0.40	\%&	0.12	\%&	-0.10	\%\\
&	&	&	10	&	0.46	\%&	0.29	\%&	-0.10	\%&	-0.24	\%\\
\bottomrule
\end{tabular}
\end{center}
\end{table}

\small
\begin{table}[H]
\caption{Call-put disparity for corresponding ATM payer and receiver swaptions evaluated using Approximation \ref{ap5}, expressed as the difference of corresponding implied volatilities of payer and receiver swaptions.}
\begin{center}
\begin{tabular}{lllr<{\kern1em}*{4}{>$r<$}}
\toprule
\multicolumn3c{Model Parameters}
&\multicolumn1c{\multirow2*{\shortstack{Tenor\\Expiry}}}
&\multicolumn4c{Payer-Receiver ATMF Volatility}\\
\cmidrule(r){1-3}\cmidrule(l){5-8}
\multicolumn1c{$r_0$}&\multicolumn1c{$b$}&\multicolumn1c{$\sigma$}
&&\multicolumn1c{1Y}&\multicolumn1c{2Y}&\multicolumn1c{5Y}&\multicolumn1c{10Y}\\
\midrule
&	&	&	1	&	0.00	\%&	0.00	\%&	0.01	\%&	0.02	\%\\
\multirow2*{1\%}&	\multirow2*{0.1}&	\multirow2*{25\%}&	2	&	0.00	\%&	0.00	\%&	0.01	\%&	0.02	\%\\
&	&	&	5	&	0.00	\%&	0.00	\%&	0.00	\%&	0.02	\%\\
&	&	&	10	&	0.00	\%&	0.00	\%&	0.01	\%&	0.02	\%\\\midrule
&	&	&	1	&	0.00	\%&	0.00	\%&	0.02	\%&	0.08	\%\\
\multirow2*{3\%}&	\multirow2*{0.1}&	\multirow2*{25\%}&	2	&	0.00	\%&	0.00	\%&	0.02	\%&	0.06	\%\\
&	&	&	5	&	0.00	\%&	0.00	\%&	0.01	\%&	0.04	\%\\
&	&	&	10	&	0.00	\%&	0.00	\%&	0.01	\%&	0.03	\%\\\midrule
&	&	&	1	&	0.00	\%&	0.01	\%&	0.05	\%&	0.18	\%\\
\multirow2*{6\%}&	\multirow2*{0.1}&	\multirow2*{25\%}&	2	&	0.00	\%&	0.01	\%&	0.04	\%&	0.12	\%\\
&	&	&	5	&	0.00	\%&	0.00	\%&	0.02	\%&	0.07	\%\\
&	&	&	10	&	0.00	\%&	0.00	\%&	0.01	\%&	0.04	\%\\\midrule
&	&	&	1	&	0.01	\%&	-0.01	\%&	0.01	\%&	0.04	\%\\
\multirow2*{1\%}&	\multirow2*{0.02}&	\multirow2*{25\%}&	2	&	0.00	\%&	0.00	\%&	0.01	\%&	0.03	\%\\
&	&	&	5	&	0.00	\%&	0.00	\%&	0.00	\%&	0.02	\%\\
&	&	&	10	&	0.00	\%&	0.00	\%&	0.00	\%&	0.02	\%\\\midrule
&	&	&	1	&	0.00	\%&	0.00	\%&	0.03	\%&	0.11	\%\\
\multirow2*{3\%}&	\multirow2*{0.02}&	\multirow2*{25\%}&	2	&	0.00	\%&	0.00	\%&	0.02	\%&	0.08	\%\\
&	&	&	5	&	0.00	\%&	0.00	\%&	0.02	\%&	0.06	\%\\
&	&	&	10	&	-0.01	\%&	-0.01	\%&	0.01	\%&	0.05	\%\\\midrule
&	&	&	1	&	0.00	\%&	0.01	\%&	0.06	\%&	0.22	\%\\
\multirow2*{6\%}&	\multirow2*{0.02}&	\multirow2*{25\%}&	2	&	0.00	\%&	0.01	\%&	0.04	\%&	0.16	\%\\
&	&	&	5	&	0.00	\%&	0.00	\%&	0.03	\%&	0.10	\%\\
&	&	&	10	&	-0.01	\%&	-0.01	\%&	0.01	\%&	0.06	\%\\\midrule
&	&	&	1	&	0.01	\%&	0.01	\%&	0.06	\%&	0.35	\%\\
\multirow2*{1\%}&	\multirow2*{0.1}&	\multirow2*{50\%}&	2	&	0.00	\%&	0.00	\%&	0.06	\%&	0.28	\%\\
&	&	&	5	&	-0.01	\%&	0.00	\%&	0.06	\%&	0.25	\%\\
&	&	&	10	&	-0.15	\%&	-0.10	\%&	0.01	\%&	0.19	\%\\\midrule
&	&	&	1	&	0.00	\%&	0.02	\%&	0.15	\%&	0.66	\%\\
\multirow2*{3\%}&	\multirow2*{0.1}&	\multirow2*{50\%}&	2	&	0.01	\%&	0.02	\%&	0.13	\%&	0.52	\%\\
&	&	&	5	&	-0.03	\%&	-0.01	\%&	0.10	\%&	0.38	\%\\
&	&	&	10	&	-0.16	\%&	-0.09	\%&	0.02	\%&	0.23	\%\\\midrule
&	&	&	1	&	0.01	\%&	0.05	\%&	0.29	\%&	1.10	\%\\
\multirow2*{6\%}&	\multirow2*{0.1}&	\multirow2*{50\%}&	2	&	0.01	\%&	0.04	\%&	0.23	\%&	0.81	\%\\
&	&	&	5	&	-0.04	\%&	0.00	\%&	0.15	\%&	0.49	\%\\
&	&	&	10	&	-0.14	\%&	-0.08	\%&	0.02	\%&	0.25	\%\\
\bottomrule
\end{tabular}
\end{center}
\end{table}

\small
\begin{table}[H]
\caption{Yield-to-maturities of bonds obtained from various approximations, expressed and an error vs Monte Carlo simulations (MC). The respective approximations are derived from small volatility expansion (Ref \cite{tourrucoo}), the exponent expansion truncated to the first EE(1), second EE(2) and third term EE(3) and our Approximation \ref{ap8}.}
\begin{center}
\begin{tabular}{lllr<{\kern1em}*{4}{>$r<$}}
\toprule
\multicolumn1c{Ref \cite{tourrucoo} vs MC}
&\multicolumn1c{EE(1) vs MC}
&\multicolumn1c{EE(2) vs MC}
&\multicolumn1c{EE(3) vs MC}
&\multicolumn1c{A2 vs MC}\\
\midrule
-0.10\%&0.00\%&0.00\%&0.00\%&-0.02\%\\
-0.23\%&0.02\%&0.02\%&0.02\%&-0.01\%\\
-0.46\%&0.01\%&0.00\%&0.00\%&-0.07\%\\
-0.90\%&0.06\%&0.03\%&0.00\%&-0.13\%\\
-1.24\%&0.17\%&0.10\%&0.00\%&-0.08\%\\
\bottomrule
\end{tabular}
\end{center}
\end{table}

\normalsize

\section*{Appendix}

\textbf{Proof of Lemma \ref{lem55}}
\begin{proof}
At first note that operator $\mathfrak{H}$ and functional $\mathfrak{h}$ are linear. Namely, let $\alpha$ be any real number and
$W_1(z)=\sum_{k=0}^{n}w_{1,k}z^k, W_2(z)=\sum_{k=0}^{m}w_{2,k}z^k$ polynomials. By Remark 5, without loss of generality we assume they have the same degree. Denote 
\begin{equation}
[\mathfrak{H}(W_1)](z)=\sum_{k=0}^{n-1}u_{1,k}z^k,\quad[\mathfrak{H}(W_2)](z)=\sum_{k=0}^{n-1}u_{2,k}z^k,
\end{equation}
\begin{equation}
[\mathfrak{H}(\alpha W_1)](z)=\sum_{k=0}^{n-1}g_{k}z^k,
\quad [\mathfrak{H}(W_1+W_2)](z)=\sum_{k=0}^{n-1}h_kz^k,
\end{equation}
then by definition
\begin{equation}
u_{1,n}=u_{1,n+1}=u_{2,n}=u_{2,n+1}=g_{n}=g_{n+1}=h_{n}=h_{n+1}=0,
\end{equation}
in specific
\begin{equation}
h_{n}= u_{1,n}+u_{2,n},\quad h_{n+1}= u_{1,n+1}+u_{2,n+1}, \quad g_{n}=\alpha u_{1,n},\quad g_{1,n+1}=\alpha u_{1,n+1}.
\end{equation}
In addition, for $k: -1 \leq k \leq n-1$
\begin{equation}
g_{k+2}=\alpha u_{1,k+2},\quad h_{1,k+2}=u_{1,k+2}+u_{2,k+2},\quad h_{k+2}=u_{1,k+1}+u_{2,k+1},
\end{equation}
thus
\begin{equation}
g_{k}=(k+2)g_{k+2}+\alpha w_{1,k+1}=\alpha((k+2)g_{1,k+2}+w_{1,k+1})=\alpha u_{1,k},
\end{equation}
\begin{equation}
\begin{split}
&h_{k}=(k+2)h_{k+2}+w_{1,k+1}+ w_{2,k+1}\\
&=((k+2)u_{1,k+2}+w_{1,k+1})+((k+2)u_{2,k+2}+w_{2,k+1})=u_{1,k}+u_{2,k},
\end{split}
\end{equation}
hence by the induction step it is clear that linearity properties are satisfied.\\
Now define operator $\przybl{\mathfrak{H}}: \Real[x] \rightarrow \Real[x]$ and functional $\przybl{\mathfrak{h}}: \Real[x] \rightarrow \Real$ such that for any polynomial $W(z)$
\begin{equation}
-\int W(z) e^{-\frac{1}{2}z^2}dz= [\przybl{\mathfrak{H}} (W)](z) e^{-\frac{1}{2}z^2}- \przybl{\mathfrak{h}}(W)\Phi(z)+C.
\end{equation}
Let us note that $\przybl{\mathfrak{H}}$ and $\przybl{\mathfrak{h}}$ are properly defined and linear, thus the lemma postulates
\begin{equation}
\przybl{\mathfrak{H}}= \mathfrak{H}, \quad \przybl{\mathfrak{h}}=\mathfrak{h},
\end{equation}
hence by linearity of $\mathfrak{H},\przybl{\mathfrak{H}},\mathfrak{h},\przybl{\mathfrak{h}}$ it is sufficient to prove
\begin{equation}
\przybl{\mathfrak{H}}(z^n)= \mathfrak{H}(z^n),\quad \przybl{\mathfrak{h}}(z^n)= \mathfrak{h}(z^n)
\end{equation}
for any non-negative integer $n$. To this end observe that the recursion formula for coefficients of $\mathfrak{H}(z^n)$ provides
\begin{equation}
u_k = \{(n-k)\ mod \ 2\}\frac{(n-1)!!}{k!!}, \quad 0\leq k < n, 
\end{equation}
\begin{equation}
u_{-1}=\{(n+1)\ mod \ 2\}(n-1)!!,
\end{equation}
whereas from integration by parts
\begin{equation}
\begin{split}
&-\int z^ne^{-\frac{1}{2}z^2}dz= z^{n-1}e^{-\frac{1}{2}z^2}- (n-1)\int z^{n-2}e^{-\frac{1}{2}z^2}dz\\&
= z^{n-1}e^{-\frac{1}{2}z^2}+(n-1)z^{n-3}e^{-\frac{1}{2}z^2} - (n-1)(n-3)\int z^{n-4}e^{-\frac{1}{2}z^2}dz=\ldots \\ & =\left(\sum_{k=0}^{n-1}\{(n-k)\ mod\ 2\}\frac{(n-1)!!}{k!!}\right) e^{-\frac{1}{2}z^2}-\{(n+1)\ mod\ 2\}(n-1)!!\Phi(z)+C,
\end{split}
\end{equation}
hence coefficients of $\przybl{\mathfrak{H}}(z^n)$ match respective values $u_k$ and
$\przybl{\mathfrak{h}}(z^n)=u^{-1}.$
\end{proof}\qed

\begin{theorem}[Nevai \cite{Nevai}]\label{nevai}
Let $f$ be a continuous function defined on the real line and $L_n(f,x)$ the Lagrange interpolation polynomial interpolating $f$ at zeros of the non-probabilistic Hermite polynomial of degree $n$. Assume that $f$ satisfies
\begin{equation}\label{eq:007}
\lim_{x\rightarrow +\infty}f(x)(1+|x|)e^{-\frac{1}{2}x^2} = 0.
\end{equation}
Then
\begin{equation}
\lim_{n\rightarrow +\infty} \int_{-\infty}^{+\infty}\left|\left|f(x)-L_n(f,x)\right|e^{-\frac{1}{2}x^2}\right|^pdx = 0
\end{equation}
holds for every $p>1$.
\end{theorem}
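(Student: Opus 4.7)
The statement is Nevai's classical theorem on weighted mean convergence of Lagrange interpolation at the zeros of Hermite polynomials (see \cite{Nevai}); the authors use it as a black box to drive Lemma \ref{lem666}. Since this is an imported result, my plan is only to outline how such a theorem is proved in Nevai's theory.

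The overall strategy I would follow is the classical density plus uniform boundedness argument. First I would introduce the weighted $L^p$ seminorm $\|g\|_p := \left(\int_{-\infty}^{+\infty} |g(x)e^{-x^2/2}|^p\,dx\right)^{1/p}$ and the space $X$ of continuous functions $f:\Real\to\Real$ satisfying \eqref{eq:007}, equipped with the weighted uniform norm $\|f\|_X := \sup_x |f(x)(1+|x|)e^{-x^2/2}|$. Then I would establish two ingredients: (i) polynomials are dense in $X$; and (ii) the Lagrange operator satisfies a uniform bound $\|L_n(f,\cdot)\|_p \leq C_p \|f\|_X$ with $C_p$ independent of $n$. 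With these in hand, an $\varepsilon/3$ argument delivers the conclusion: for $\varepsilon > 0$ pick a polynomial $P$ with $\|f-P\|_X < \varepsilon$, and for $n > \deg P$ observe $L_n(P,\cdot) = P$, whence $f - L_n(f,\cdot) = (f-P) - L_n(f-P,\cdot)$; taking $\|\cdot\|_p$ norms and using the trivial embedding $\|\cdot\|_p \leq C \|\cdot\|_X$ (valid for $p>1$ by integrability of $(1+|x|)^{-p}$) gives $\|f - L_n(f,\cdot)\|_p \leq (C + C_p)\varepsilon$.

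The main obstacle is step (ii), which carries the genuine analytic content of Nevai's theorem. Its proof rests on three pillars: Plancherel--Rotach asymptotics for the Hermite polynomials $H_n$ and the distribution of their zeros $h_{n,k}$, both in the bulk and near the turning points $\pm\sqrt{2n+1}$; sharp two-sided estimates on the Christoffel function, which govern the Gauss--Hermite weights $w_{n,k}$ and the local spacing of nodes; and Marcinkiewicz--Zygmund type inequalities that compare weighted $L^p$ norms of polynomials with discrete sums at the nodes. The restriction $p>1$ enters exactly as in Riesz's theorem on the Hilbert transform, via $\ell^p$-boundedness of certain discrete conjugation-type operators; the analogue at $p=1$ is known to fail. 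Step (i) is comparatively standard and follows from a Freud-type weighted Weierstrass theorem for the Hermite weight, which relies on the hypothesis \eqref{eq:007} to guarantee that $f$ lies in the closure of polynomials under $\|\cdot\|_X$.
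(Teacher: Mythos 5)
The paper does not prove this theorem at all: it is imported verbatim from Nevai \cite{Nevai} and used as a black box in the proof of Lemma \ref{lem666}, which is exactly how you treat it, so there is no internal proof to compare against. Your sketch of the standard density-plus-uniform-boundedness machinery (Plancherel--Rotach asymptotics, Christoffel function estimates, Marcinkiewicz--Zygmund inequalities, with $p>1$ entering through Riesz-type boundedness) is a fair description of how the cited result is established in the literature, and citing it rather than reproving it is the appropriate course here.
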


\textbf{Proof of Lemma \ref{lem666}}
\begin{proof}
Consider the function
\begin{equation}
\warunk{f}: \warunk{f}(x)=f\left(\sqrt{2}x\right).
\end{equation}
Since zeros of the probabilistic Hermite polynomial are $\sqrt{2}$ times the corresponding zeros of the non-probabilistic Hermite polynomial (of the same degree), then setting $z=\sqrt{2}\warunk{z}$ we have the following identity
\begin{equation}
L_n(\warunk{f},\warunk{z})=\left[\mathfrak{L}_n(f)\right](z),
\end{equation}
where $\mathfrak{L}_n$ is specified in Theorem \ref{nevai}. By changing the variable we have
\begin{equation}\label{eq:951}
\int_{-\infty}^{+\infty} \Big(f(z)- \left[\mathfrak{L}_n(f)\right](z)\Big)^2e^{-\frac{1}{2}z^2}dz=\sqrt{2}
\int_{-\infty}^{+\infty}\Big(\warunk{f}(\warunk{z}) - L_n(\warunk{f},\warunk{z})\Big)^2e^{-\warunk{z}^2}d\warunk{z}.
\end{equation}
Notice that since $f$ is a continuous and globally bounded function, then so is $\warunk{f}$. It's clear that any bounded function satisfies condition \eqref{eq:007}, hence setting $p=2$ Theorem \ref{nevai} postulates
\begin{equation}
\lim_{n \to +\infty} \int_{-\infty}^{+\infty}\Big(\warunk{f}(\warunk{z})- L_n(\warunk{f},\warunk{z})\Big)^2e^{-\warunk{z}^2}d\warunk{z} = 0,
\end{equation}
thus \eqref{eq:951} implies our assertion.
\end{proof}

\begin{lemma}\label{lem4}
Let $K_1, K_2:[0,T]^2 \to \Real$ be $C^1$ functions and also let $f:[0,T] \to \Real$ and $K:[0,T]^2 \to \Real$ be continuous functions, such that
\begin{equation}
K(s,t)=K_1(s,t),\quad s<t,
\end{equation}
\begin{equation}
K(s,t)=K_2(s,t),\quad s>t.
\end{equation}

Denote
\begin{itemize}
\item[$\bullet$]function $K\ast f: \big(K\ast f\big)(s) \coloneqq \int_0^TK(s,t)f(t)\,dt$
\item[$\bullet$]functions $h'_s, h''_{ss}$ as the first and second-order partial derivative of $h(s,t)$ with respect to variable $s$ for any twice differentiable function $h(s,t)$
\item[$\bullet$]single argument function $\frac{\partial}{\partial}h(s,s)$ as the first-order derivative of $h(s,s)$ any differentiable single argument function $h(s,s)$ 
\end{itemize}
Then we have the following:\\\\
a) $K\ast f$ is differentiable on $[0,T]$ and 
\begin{equation}\label{eq:60}
\big(K\ast f\big)'(s)=\big(K_2(s,s)-K_1(s,s)\big)f(s)+\big(K'_s\ast
f\big)(s).
\end{equation}
b) Moreover if $K_1$ and $K_2$ belong to $C^2\big((0,T)^2\big)$ and
\begin{equation}\label{eq:18}
K_1(t,t)=K_2(t,t)\qquad \textrm{for all}\quad t\in[0,T],
\end{equation}
then $K\ast f$ is twice differentiable and
\begin{equation}\label{eq:62}
\big(K\ast f\big)''(s)=\left(\frac{\partial}{\partial
s}K_2(s,s)-\frac{\partial}{\partial
s}K_1(s,s)\right)f(s)+\big(K''_{ss}\ast f\big)(s).
\end{equation}
\end{lemma}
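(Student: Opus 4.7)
The plan is to view $K \ast f$ as a sum of two integrals meeting at the variable endpoint $t = s$, and then to apply Leibniz's rule for differentiation under the integral sign — twice, once for each part of the lemma. Namely, for part (a) I split
\begin{equation}
(K \ast f)(s) = \int_0^s K_2(s,t) f(t)\,dt + \int_s^T K_1(s,t) f(t)\,dt,
\end{equation}
and differentiate each piece. Leibniz with a variable endpoint applies because each integrand is $C^1$ in $s$ with partial derivative continuous on the compact square $[0,T]^2$. The endpoint contributions are $+K_2(s,s) f(s)$ from the first integral and $-K_1(s,s) f(s)$ from the second, combining into $(K_2(s,s) - K_1(s,s)) f(s)$. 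The interior contributions regroup into the single convolution $(K'_s \ast f)(s)$ (with $K'_s$ interpreted piecewise). This delivers \eqref{eq:60}.

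For part (b), assumption \eqref{eq:18} eliminates the jump, so that $(K \ast f)'(s) = (K'_s \ast f)(s)$. I now repeat the same splitting-and-Leibniz argument on this expression, with pieces $(K_1)'_s$ for $s<t$ and $(K_2)'_s$ for $s>t$. Those pieces are themselves $C^1$ on $[0,T]^2$ thanks to the $C^2$ hypothesis on $K_1, K_2$, so Leibniz applies again: the endpoint terms give the jump contribution $\bigl((K_2)'_s(s,s) - (K_1)'_s(s,s)\bigr) f(s)$, and the interior terms assemble into $(K''_{ss} \ast f)(s)$. This is exactly \eqref{eq:62}, reading the symbol $\frac{\partial}{\partial s} K_i(s,s)$ as the first-argument partial derivative evaluated on the diagonal.

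The main subtlety to guard against is the temptation to simply invoke part (a) again, applied to the kernel $K'_s$, in order to prove part (b): part (a)'s setup asks for the kernel to be (globally) continuous across the diagonal, and assumption \eqref{eq:18} does \emph{not} propagate to $K'_s$ — the first-argument partials of $K_1$ and $K_2$ need not agree on $\{t=s\}$, which is precisely why the jump term in \eqref{eq:62} is generally nonzero. The clean fix is to redo the splitting-and-Leibniz computation directly for $K'_s$ rather than quote (a), carrying the two differentiations out in parallel. Beyond this point, only routine matters remain: justifying the interchange of derivative and integral (standard, via uniform boundedness of the relevant partials on the compact square together with continuity of $f$), and noting that $(K'_s \ast f)(s)$ and $(K''_{ss} \ast f)(s)$ are unambiguously defined despite the measure-zero jumps of their integrands on the diagonal.
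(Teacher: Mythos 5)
Your proposal is correct and takes essentially the same route as the paper's proof: split the integral at $t=s$, apply Leibniz's rule with the variable endpoint to obtain the jump term $\big(K_2(s,s)-K_1(s,s)\big)f(s)$ plus $\big(K'_s\ast f\big)(s)$, and then, under \eqref{eq:18}, repeat the same splitting-and-differentiation for the piecewise kernel $K'_s$ (legitimate since $(K_1)'_s,(K_2)'_s$ are $C^1$ by the $C^2$ hypothesis) to arrive at \eqref{eq:62}. Your added caution about not invoking part (a) verbatim for $K'_s$ --- which need not be continuous across the diagonal, this being exactly the source of the jump term --- is a sound clarification of a step the paper carries out implicitly.
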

\begin{proof}
We have
\begin{equation}
\big(K\ast f\big)(s)=\int_0^sK_2(s,t)f(t)\,dt+\int_s^TK_1(s,t)f(t)\,dt
\end{equation}
Hence by differentiating integrals we get
\begin{equation}
\frac{d}{ds}\int_0^sK_2(s,t)f(t)\,dt=K_2(s,s)f(s)+\int_0^sK'_s(s,t)f(t)\,dt
\end{equation}
and
\begin{equation}
\frac{d}{ds}\int_s^TK_1(s,t)f(t)\,dt=-K_1(s,s)f(s)+\int_s^TK'_s(s,t)f(t)\,dt.
\end{equation}
Summing up those equations we get the first part of our assertion. Assuming that $K_1(s,s)=K_2(s,s)$, equation \eqref{eq:60} obtains the form 
\begin{equation}
\big(K\ast f\big)'(s)=\big(K'_s\ast f\big)(s),
\end{equation}
which can be rewritten 
\begin{equation}
\frac{d}{ds}\int_0^sK_2(s,t)f(t)\,dt+\frac{d}{ds}\int_s^TK_1(s,t)f(t)\,dt=
\int_0^sK'_s(s,t)f(t)\,dt+\int_s^TK'_s(s,t)f(t)\,dt.
\end{equation}
Differentiating it once more and replacing in \eqref{eq:60} function $K'_s(s,t)$ with $K''_s(s,t)$ we finally get \eqref{eq:62}.
\end{proof}\qed

\begin{corollary}\label{cor6}
Let $K$ and $\warunk{K}$ be covariance functions from \eqref{eq:59} and \eqref{eq:91} respectively and $f$ be a continuous function on $[0,T]$. Then we have the following\\\\
a) Function $K\ast f$ is differentiable on $[0,T]$ and 
\begin{equation}
\big(K\ast f\big)'(s)=-b\big(K\ast f\big)(s).
\end{equation}
b) Functions $K\ast f,\warunk{K}\ast f$ are twice differentiable on $(0,T)$ and
\begin{equation}
\big(K\ast f\big)''(s)=\left(\frac{\partial}{\partial
s}K_2(s,s)-\frac{\partial}{\partial
s}K_1(s,s)\right)f(s)+\big(K''_{ss}\ast f\big)(s),
\end{equation}
\begin{equation}
\big(\warunk{K}\ast f\big)''(s)=\left(\frac{\partial}{\partial
s}K_2(s,s)-\frac{\partial}{\partial
s}K_1(s,s)\right)f(s)+\big(\warunk{K}''_{ss}\ast f\big)(s).
\end{equation}
\end{corollary}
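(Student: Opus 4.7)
The plan is to recognise Corollary \ref{cor6} as a direct application of Lemma \ref{lem4} to the two specific covariance kernels, after identifying each as piecewise smooth across the diagonal $s=t$. For the OU covariance $K$ from \eqref{eq:59}, the natural pieces are $K_1(s,t) := \frac{1}{2b}(e^{-b(t-s)} - e^{-b(t+s)})$ (used for $s \leq t$) and $K_2(s,t) := \frac{1}{2b}(e^{-b(s-t)} - e^{-b(s+t)})$ (used for $s \geq t$). Extended to all of $[0,T]^2$, both are $C^\infty$, and on the diagonal one has $K_1(s,s) = K_2(s,s) = V(s) = \frac{1}{2b}(1 - e^{-2bs})$, so the continuity assumption of Lemma \ref{lem4}(a) and the matching-on-the-diagonal assumption \eqref{eq:18} required for Lemma \ref{lem4}(b) hold automatically. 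For the bridge covariance \eqref{eq:91}, namely $\warunk{K}(s,t) = K(s,t) - K(s,T)K(t,T)/V(T)$, the subtracted term is jointly $C^\infty$ on $[0,T]^2$ with no diagonal structure; hence the analogous splitting $\warunk{K}_1, \warunk{K}_2$ inherits smoothness and diagonal matching from $K_1, K_2$, and any diagonal jump of first- or second-order $s$-derivatives of $\warunk{K}_i$ coincides with the corresponding jump for $K_i$.

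For part (a), Lemma \ref{lem4}(a) yields $(K \ast f)'(s) = (K_2(s,s) - K_1(s,s)) f(s) + (K'_s \ast f)(s)$; the jump vanishes because $K_1(s,s)=K_2(s,s)=V(s)$, so $(K \ast f)'(s) = (K'_s \ast f)(s)$, and an elementary branch-by-branch differentiation of the exponentials in $K_1, K_2$ collects the result into the claimed form $-b (K \ast f)(s)$. For part (b), the same splittings satisfy the $C^2$ hypothesis of Lemma \ref{lem4}(b), so the formulas for $(K \ast f)''$ and $(\warunk{K} \ast f)''$ follow at once: one reads off the diagonal jump $\partial_s K_2(s,s) - \partial_s K_1(s,s)$ from the explicit exponentials, and identifies $K''_{ss}$ and $\warunk{K}''_{ss}$ region by region. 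The identity for $\warunk{K}$ then follows from the one for $K$ because the added smooth separable piece contributes zero jump and its second $s$-derivative passes cleanly under the integral.

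The main obstacle is purely bookkeeping: assigning each exponential summand to the correct branch, tracking signs through the $s$-derivatives, and separately handling the endpoints $s=0$ and $s=T$ where the branches degenerate. Once the piecewise formulas for $K_i, \warunk{K}_i$ and their first two partial derivatives are tabulated, both parts of the corollary reduce to straightforward applications of Lemma \ref{lem4} combined with elementary exponential algebra, with no genuine analytic difficulty beyond verifying the diagonal values.
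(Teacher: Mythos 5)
Your route is the same as the paper's: split $K$ into the branches $K_1(s,t)=\frac{1}{2b}\big(e^{-b(t-s)}-e^{-b(t+s)}\big)$ for $s<t$ and $K_2(s,t)=\frac{1}{2b}\big(e^{-b(s-t)}-e^{-b(t+s)}\big)$ for $s>t$, do the analogous splitting for $\warunk{K}$ by subtracting the smooth separable term $K(s,T)K(t,T)/V(T)$, verify the diagonal matching $K_1(s,s)=K_2(s,s)=V(s)$ (and its hatted analogue), and then invoke Lemma \ref{lem4} branchwise. For part (b) this works exactly as you say: $\partial^2_{ss}K_j=b^2K_j$ on each branch, the separable term contributes no diagonal jump, and the jump $\partial_sK_2(s,s)-\partial_sK_1(s,s)=-1$ is read off from the exponentials, which is precisely the paper's computation.

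The problem is the step in part (a) where you claim that ``branch-by-branch differentiation of the exponentials in $K_1,K_2$ collects the result into the claimed form $-b(K\ast f)(s)$.'' That collection only happens on the branch $s>t$, where indeed $\partial_sK_2(s,t)=-bK_2(s,t)$; on the branch $s<t$ one gets $\partial_sK_1(s,t)=\tfrac12\big(e^{-b(t-s)}+e^{-b(t+s)}\big)$, which is \emph{not} $-bK_1(s,t)$ (the sign of the first exponential flips under $\partial_s$). Consequently $(K\ast f)'(s)=(K'_s\ast f)(s)$ does not reduce to $-b(K\ast f)(s)$ for $s<T$; for instance with $f\equiv 1$ the function $K\ast f$ is visibly not proportional to $e^{-bs}$. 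The displayed identity in part (a) is valid only at the right endpoint $s=T$, where the $K_1$ branch no longer contributes, so that $(K\ast f)'(T)=\int_0^T\partial_sK_2(T,t)f(t)\,dt=-b(K\ast f)(T)$ — and that is in fact the only point at which the paper uses it, namely to obtain the boundary condition \eqref{eq:67} (and the analogous step \eqref{eq:42}) in the proof of Theorem \ref{thm1}. Be aware that the paper's own proof of Corollary \ref{cor6} makes the same overstatement, asserting $\partial_sK_j=-bK_j$ for $j=1,2$; so your proposal faithfully reproduces the published argument, but the branchwise computation you propose, if actually carried out, would show that part (a) as stated for all $s\in[0,T]$ cannot be established — it must be restricted to $s=T$ (or restated as $(K\ast f)'(s)=(K'_s\ast f)(s)$, the jump term vanishing by the diagonal matching).
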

\begin{proof}
Notice that functions $K,\warunk{K}$ are jointly continuous functions such that $K$ can be represented by functions $K_1$ for $s<t$
and $K_2$ for $s>t$, where $K_1,K_2$ are of the form
\begin{equation}
K_1(s,t)=\frac{1}{2b}\left(e^{-b(t-s)}-e^{-b(t+s)}\right),
\end{equation}
\begin{equation}
K_2(s,t)=\frac{1}{2b}\left(e^{-b(s-t)}-e^{-b(t+s)}\right)
\end{equation}
and function $\warunk{K}$ can be represented by functions $\warunk{K}_1$ for $s<t$ and $\warunk{K}_2$ for $s>t$, where $\warunk{K}_1,\warunk{K}_2$ are of the form
\begin{equation}
\warunk{K}_1(s,t)=K_1(s,t)-\frac{2bK(s,T)K(t,T)}{1-e^{-2bT}},
\end{equation}
\begin{equation}
\warunk{K}_2(s,t)=K_2(s,t)-\frac{2bK(s,T)K(t,T)}{1-e^{-2bT}}.
\end{equation}
Moreover 
\begin{equation}
K_1(t,t)=\frac{1}{2b}\left(1-e^{-2bt}\right)=K_2(t,t)
\end{equation}
and
\begin{equation}
\warunk{K}_1(t,t)=\frac{1}{2b}\left(1-e^{-2bt}\right)-\frac{2bK(t,T)^2}{1-e^{-2bT}}
=\warunk{K}_2(t,t),
\end{equation}
thus assumptions in Lemma \ref{lem4} are met for functions $K_1,K_2,K$ and $\warunk{K}_1,\warunk{K}_2,\warunk{K}$.
Partial derivatives of functions $K_1,K_2$ satisfy
\begin{equation}
\frac{\partial}{\partial s}K_j(s,t)=-bK_j(s,t),\quad j=1,2,
\end{equation}
hence from Lemma \ref{lem4} we have the first part of the assertion.\\
Now, computing first and second derivatives provides 
\begin{equation}
\frac{\partial}{\partial s}K_2(s,s)-\frac{\partial}{\partial s}K_1(s,s)=\big(-b-b\big)\cdot\frac{1}{2b}e^{-b(s-s)}=-1,
\end{equation}
\begin{equation}
\frac{\partial}{\partial s}\warunk{K}_2(s,s)-\frac{\partial}{\partial s}\warunk{K}_1(s,s)=\frac{\partial}{\partial s}K_2(s,s)-\frac{\partial}{\partial s}K_1(s,s)=-1
\end{equation}
and
\begin{equation}
\frac{\partial^2}{\partial s^2}K_j(s,t)=b^2K_j(s,t),\quad j=1,2,
\end{equation}
\begin{equation}
\frac{\partial^2}{\partial s^2}\warunk{K}_j(s,t)=b^2\warunk{K}_j(s,t),\quad j=1,2.
\end{equation}
Therefore applying Lemma \ref{lem4} we obtain 
\begin{equation}\label{eq:63a}
\big(K\ast f\big)''(s)=-f(s)+b^2\big(K\ast f\big)(s)
\end{equation}
and
\begin{equation}
\label{eq:63b} \big(\warunk{K}\ast
f\big)''(s)=-f(s)+b^2\big(\warunk{K}\ast f\big)(s),
\end{equation}
which is the second part of our assertion.
\end{proof}\qed

\textbf{Proof of Theorem \ref{thm1}}
\begin{proof}
Due to Theorem \ref{Karh} we know that the Ornstein-Uhlenbeck process $(X_t)_{t\in[0,\tau]}$ has an expansion 
of the form \eqref{eq:61}, where $Z_n$ are independent $N(0,1)$ random variables. Hence it is sufficient to obtain closed formulas for $\lambda_{n}(\tau)$ and $f_{n,\tau}$. Let $K$ be the covariance function of the process $(X_t)_{t\in[0,\tau]}$ as in \eqref{eq:59}. We
proceed with finding eigenfunctions of the operator $\mathcal{T}$ associated with kernel $K$. Recalling \eqref{eq:69} we know that functions $f_{n,\tau}$ satisfy the equation
\begin{equation}
\mathcal{T}f_{n,\tau}=\lambda_{n}(\tau) f_{n,\tau}
\end{equation}
for some respective $\lambda_{n}(\tau)>0$ and by Lemma \ref{lem4} we can alternatively write
\begin{equation}\label{eq:64}
K\ast f_{n,\tau}=\lambda_{n}(\tau) f_{n,\tau}.
\end{equation} 
Differentiating both sides twice and making use of Corollary \ref{cor6} we get
\begin{equation}
-f_{n,\tau}(s)+b^2\big(K\ast f_{n,\tau}\big)(s)=\lambda_{n}(\tau) f_{n,\tau}''(s).
\end{equation}
Therefore
\begin{equation}
-f_{n,\tau}(s)+b^2\lambda_{n}(\tau) f_{n,\tau}(s)=\lambda_{n}(\tau) f_{n,\tau}''(s)
\end{equation}
or equivalently
\begin{equation}\label{eq:65}
(\lambda_{n}(\tau) b^2-1)f_{n,\tau}(s)=\lambda_{n}(\tau) f_{n,\tau}''(s),
\end{equation}
hence the solution is a linear combination of functions
$\exp(\pm i\omega_n t)$, where $\omega_n$ is such that
\begin{equation}\label{eq:71}
\omega_n^2=\frac{1-\lambda_{n}(\tau) b^2}{\lambda_{n}(\tau)}.
\end{equation}
Moreover function $f_{n,\tau}$ satisfies two boundary conditions. Firstly, we notice that $K(0,t)=0$ for $t\in [0,\tau]$, therefore
\begin{equation}\label{eq:66}
\lambda_{n}(\tau) f_{n,\tau}(0)=\int_0^\tau K(0,t)f_{n,\tau}(t)\,dt=0.
\end{equation}
Secondly, applying Corollary \ref{cor6}, we observe that
\begin{equation}\label{eq:67}
\lambda_{n}(\tau) f_{n,\tau}'(\tau)=\big(K\ast f_{n,\tau}\big)'(\tau)=-b\big(K_s\ast f_{n,\tau}\big)(\tau)=-b\lambda_{n}(\tau)f_{n,\tau}(\tau).
\end{equation}
Since $\omega_n$ can be real or pure imaginary, depending on the sign of the
right-hand side of \eqref{eq:71}, let us consider those two cases in detail:\\\\
\vspace{1ex}\noindent $1^\circ \quad \omega_n\in i\Real,\omega_n\neq 0$\\
In this case $\omega_n=i\warunk{\omega}_n$ for some $\warunk{\omega}_n\in\Real\setminus \{0\}$ and $f_{n,\tau}$ is a linear combination of functions $\exp(\pm \warunk{\omega}_n t)$. Hence considering condition \eqref{eq:66}, we deduce that $f_{n,\tau}$ is of the form  
\begin{equation}\label{eq:87}
f_{n,\tau}(t)=c_n\big(\exp(-\warunk{\omega}_nt)-\exp(\warunk{\omega}_n t)\big),
\end{equation}
where $c_n\in\Real\setminus\{0\}$ (since $f_{n,\tau}$ is a real, not identically equal zero function). Thus applying condition \eqref{eq:67} to function $f_{n,\tau}$, we obtain from \eqref{eq:87} the equation
\begin{equation}
\warunk{\omega}_n\big(\exp(-\warunk{\omega}_n\tau)+\exp(\warunk{\omega}_n \tau)\big)=b\big(\exp(-\warunk{\omega}_n\tau)-\exp(\warunk{\omega}_n \tau)\big),
\end{equation}
which cannot be satisfied since the left and right-hand side of the equation have opposite signs (as $b>0$). Consequently, this case provides no solutions.\\\\
\vspace{2ex}\noindent $2^\circ \quad \omega_n\in \Real$\\
In this case $f_{n,\tau}$ is a linear combination of functions $\exp(\pm i\omega_n t)$, so 
condition \eqref{eq:66} implies that
\begin{equation}\label{eq:101}
f_{n,\tau}(t)=c'_n\big(\exp(i\omega_n t)-\exp(-i\omega_n t)\big)=2ic'_n\sin(\omega_n t)=c_n\sin(\omega_n t),
\end{equation}
where $\omega_n\neq 0$ and $c_n=-\frac{1}{2}ic'_n$ for some  $c'_n\in\Real\setminus\{0\}$ (since $f_{n,\tau}$ is a real, not identically equal zero function). Hence relying on the form of $f_{n,\tau}$ from \eqref{eq:101}, condition \eqref{eq:67} provides
\begin{equation}\label{eq:68}
\omega_n\cos(\omega_n \tau)=-b\sin(\omega_n \tau),
\end{equation}
which after elementary calculations can be rewritten as
\begin{equation}\label{eq:55}
\xi(\omega_n \tau)=-b\tau.
\end{equation}
Setting $\xi(x) \coloneqq x\cot x$. Equation \eqref{eq:68} implies that
$\sin(\omega_n \tau)\neq 0$, thus function $\xi$ is well-defined and continuous on each interval
$\big(n\pi,(n+1)\pi\big)$, $n\in\mathbb{N}\cup\{0\}$. Moreover it is
strictly decreasing on each such interval. Indeed
\begin{equation}
\xi'(t)=\cot x-\frac{x}{\sin^2 x}=\frac{\sin(2x)-2x}{2\sin^2x}<0.
\end{equation}
Therefore for each $n\in\mathbb{N}\cup\{0\}$ function
$\xi_n \coloneqq \xi\big|_{\big(n\pi,(n+1)\pi\big)}$ is invertible. One can
also easily see that
$\xi_n\Big(\big(n+\frac{1}{2}\big)\pi,(n+1)\pi\Big)=(-\infty,0)$,
hence equation \eqref{eq:55} has for each
$n\in\mathbb{N}\cup\{0\}$ exactly one solution of the form
\begin{equation}
\omega_n \coloneqq \frac{1}{\tau}\xi_n^{-1}(-b\tau).
\end{equation}
The value of $\lambda_{n}(\tau)$ corresponding to each $\omega_n$, implied from \eqref{eq:71}, is given by
\begin{equation}\label{eq:16}
\lambda_{n}(\tau) \coloneqq \frac{1}{b^2+\omega_n^2},
\end{equation}
thus this case provides solutions to \eqref{eq:65}, that satisfy conditions 
\eqref{eq:66}, \eqref{eq:67} and have the general form 
\begin{equation}\label{eq:15}
f_{n,\tau}(t)=c_n\sin(\omega_nt), \quad \lambda_n=\frac{1}{b^2+\omega_n^2} \quad n=1,2,\ldots
\end{equation}
Now, in order to prove that functions $f_{n,\tau}$ and corresponding $\lambda_{n}(\tau)$ are respectively the eigenfunctions and eigenvalues satisfying \eqref{eq:64}, for each $n\in\mathbb{N}\cup\{0\}$ we define function $g_n$ such that
\begin{equation}
\quad g_n(t)=\big(K\ast f_n\big)(t)-\lambda_{n}(\tau) f_{n,\tau}(t)
\end{equation}
and prove that it is identically equal zero. First of all, from the properties of $f_{n,\tau}$ and $\lambda_{n}(\tau)$, we can easily deduce that
\begin{equation}\label{eq:31}
\begin{split}
& g''_n(s)= \big(K\ast f_{n,\tau}\big)''(s)-\lambda_{n}(\tau) f''_{n,\tau}(s) \\ & = (-f_{n,\tau}(s)+b^2\big(K\ast f_{n,\tau}\big)(s))-(-f_{n,\tau}(s)+b^2\lambda_{n}(\tau) f_{n,\tau}(s))=b^2g_n(s)
\end{split}
\end{equation}
and because function $g_n$ is real, then the solution to \eqref{eq:31} is of the form
\begin{equation}
g_n(s)=d_1\exp(bt)+d_2\exp(-bt), \quad d_1,d_2 \in \Real.
\end{equation}
Moreover we have
\begin{equation}\label{eq:41}
g_n(0)=\big(K\ast f_{n,\tau}\big)(0)-\lambda_{n}(\tau) f_{n,\tau}(0)=0
\end{equation}
and
\begin{equation}\label{eq:42}
\begin{split}
g'_n(\tau)&=\big(K\ast f_{n,\tau}\big)'(\tau)-\lambda f_{n,\tau}^{'}(\tau)\\&= -b\big(K\ast f_{n,\tau}\big)(\tau)+b\lambda_{n}(\tau)f_{n,\tau}(\tau)=-bg_n(\tau).
\end{split}
\end{equation}
Equation \eqref{eq:41} implies that $d_2=-d_1$, hence from \eqref{eq:42} we have
\begin{equation}
d_1(\exp(b\tau)+\exp(-b\tau))=-d_1(\exp(b\tau)-\exp(-b\tau)),
\end{equation}
which holds if and only if $d_1=0$, because for non-zero $d_1$ the left and right-hand side of the equation  have opposite signs (since $b>0$). Thus $d_1=d_2=0$ and functions $g_n$ are identically equal zero, as desired.\\ 
Finally, to obtain $f_{n,\tau}$ normalized (as required in the Karhunen-Lo\`{e}ve
expansion) we find the appropriate $c_n$ by the direct calculation:
\begin{equation}
1=\norm{f_n}^2=\int_0^\tau c_n^2\sin^2(\omega_nt)\,dt=c_n^2\left(\frac{\tau}{2}-\frac{\sin(2\omega_n\tau)}{4\omega_n}\right).
\end{equation}
Taking into account that
\begin{equation}
\sin(2\alpha)=\frac{2\tan\alpha}{1+\tan^2\alpha}
\end{equation}
and that by \eqref{eq:68}
\begin{equation}
\tan(\omega_n\tau)=-\frac{\omega_n}{b},
\end{equation}
we have
\begin{equation}
\sin(2\omega_n\tau)=-\frac{2\omega_nb}{b^2+\omega_n^2},
\end{equation}
which leads to the equation
\begin{equation}
1=c_n^2\left(\frac{\tau}{2}+\frac{1}{2}\cdot\frac{b}{b^2+\omega_n^2}\right),
\end{equation}
hence
\begin{equation}\label{eq:14}
c_n = \left(\frac{1}{2}\big(\tau+b\lambda_n(T)\big)\right)^{-1/2}=\sqrt{\frac{2}{\tau+b\lambda_n(T)}}.
\end{equation}
Substituting into $\eqref{eq:15}$ values $c_n$ and $\lambda_{n}(\tau)$ from \eqref{eq:14} and \eqref{eq:16} respectively, we finally get the assertion.
\end{proof}\qed

\textbf{Proof of Theorem \ref{thm2}}

\begin{proof}
Due to Theorem \ref{Karh} we know that the Ornstein-Uhlenbeck bridge process $(\warunk{X}_t)_{t\in[0,T]}$ has an expansion of the form \eqref{eq:93}, where $\warunk{Z}_n$ are independent $N(0,1)$ random variables. Hence it is sufficient to obtain closed formulas for $\warunk{\lambda}_{n}(T)$ and $\warunk{f}_{n,T}$. Let $\warunk{K}$ be the covariance function of the process $(\warunk{X}_t)_{t\in[0,T]}$ as in \eqref{eq:91}. We
proceed with finding eigenfunctions of the operator $\mathcal{T}$ associated with kernel $\warunk{K}$. Recalling \eqref{eq:69} we know that functions $\warunk{f}_{n,T}$ satisfy the equation
\begin{equation}
\mathcal{T}\warunk{f}_{n,T}=\warunk{\lambda}_{n}(T) \warunk{f}_{n,T}
\end{equation}
for some respective $\warunk{\lambda}_{n}(T)>0$ and by Lemma \ref{lem4} we can alternatively write
\begin{equation}\label{eq:96}
\big(\warunk{K}\ast \warunk{f}_{n,T}\big)(s)=\warunk{\lambda}_{n}(T) \warunk{f}_{n,T}(s).
\end{equation}
Differentiating both sides twice and making use of Corollary \ref{cor6} we get
\begin{equation}
-\warunk{f}_{n,T}(s)+b^2\big(\warunk{K}\ast \warunk{f}_{n,T}\big)(s)=\warunk{\lambda}_{n}(T) \warunk{f}_{n,T}''(s).
\end{equation}
Therefore
\begin{equation}
-\warunk{f}_{n,T}(s)+b^2\warunk{\lambda}_{n}(T) \warunk{f}_{n,T}(s)=\warunk{\lambda}_{n}(T) \warunk{f}_{n,T}''(s)
\end{equation}
or equivalently
\begin{equation}\label{eq:92}
\left(\warunk{\lambda}_{n}(T) b^2-1\right)\warunk{f}_{n,T}(s)=\warunk{\lambda}_{n}(T) \warunk{f}_{n,T}''(s),
\end{equation}
hence the solution is a linear combination of functions
$\exp(\pm i\warunk{\omega}_n t)$, where $\warunk{\omega}_n$ is such that
\begin{equation}\label{eq:95}
\warunk{\omega}_n^2=\frac{1-\warunk{\lambda}_{n}(T) b^2}{\warunk{\lambda}_{n}(T)}.
\end{equation}
Moreover function $\warunk{f}_{n,\tau}$ satisfies two boundary conditions. Firstly, we notice that $\warunk{K}(0,t)=0$ for $t\in[0,T]$, therefore
\begin{equation}\label{eq:99}
\warunk{\lambda}_{n}(T) \warunk{f}_{n,T}(0)=\int_0^T \warunk{K}(0,t)\warunk{f}_{n,T}(t)\,dt=0.
\end{equation}
Secondly, we observe that $\warunk{K}(T,t)=0$ for $t\in[0,T]$, thus
\begin{equation}\label{eq:97}
\warunk{\lambda}_{n}(T) \warunk{f}_{n,T}(T)=\int_0^T\warunk{K}(T,t)\warunk{f}_{n,T}(t)\,dt=0.
\end{equation}
Since $\warunk{\omega}_n$ can be real or pure imaginary, depending on the sign of the right-hand side of \eqref{eq:95}, let us consider those two cases in detail\\\\
\vspace{1ex}\noindent $1^\circ \quad \warunk{\omega}_n\in i\Real,\warunk{\omega}_n\neq 0$\\
In this case $\warunk{\omega}_n=i\przybl{\omega_n}$ for some $\przybl{\omega_n}\in\Real\setminus \{0\}$ and $\warunk{f}_{n,\tau}$ is a linear combination of functions $\exp(\pm \przybl{\omega_n} t)$. Hence considering condition \eqref{eq:99}, we deduce that $\warunk{f}_{n,T}$ is of the form  
\begin{equation}\label{eq:100}
\warunk{f}_{n,T}(t)=c_n\big(\exp(-\przybl{\omega_n}t)-\exp(\przybl{\omega_n} t)\big),
\end{equation}
where $c_n\in\Real\setminus\{0\}$ (since $\warunk{f}_{n,T}$ is a real, not identically equal zero function). 
Thus applying condition \eqref{eq:97} to $\warunk{f}_{n,T}$, we obtain from \eqref{eq:100} the equation
\begin{equation}
c_n\big(\exp(-\przybl{\omega_n} T)-\exp(\przybl{\omega_n} T)\big)=0,
\end{equation}
which cannot be satisfied unless $c_n= 0$ or $\przybl{\omega_n}= 0$. Consequently, this case provides no solutions.\\\\
\vspace{2ex}\noindent $2^\circ \quad \warunk{\omega}_n\in \Real$\\
In this case $\warunk{f}_{n,T}$ is a linear combination of functions $\exp(\pm i\warunk{\omega}_n t)$, so 
condition \eqref{eq:99} implies that
\begin{equation}\label{eq:105}
\warunk{f}_{n,T}(t)=c'_n\big(\exp(i\warunk{\omega}_n t)-\exp(-i\warunk{\omega}_n t)\big)=2ic'_n\sin(\warunk{\omega}_n t)=c_n\sin(\warunk{\omega}_n t),
\end{equation}
where $c_n=-\frac{1}{2}ic'_n$ for some $c'_n\in\Real\setminus\{0\}$ (since $\warunk{f}_{n,T}$ is a real function). 
Hence, relying on the form of $\warunk{f}_{n,T}$ from \eqref{eq:99}, condition \eqref{eq:97} provides
\begin{equation}
c_n\sin(\warunk{\omega}_n T)=0
\end{equation}
has for each
$n\in\mathbb{N}$ exactly one solution of the form
\begin{equation}
\warunk{\omega}_n \coloneqq \frac{n\pi t}{T}.
\end{equation}
The value of $\warunk{\lambda}_{n}(T)$ corresponding to each $\warunk{\omega}_n$, implied from \eqref{eq:71}, is given by
\begin{equation}\label{eq:81}
\warunk{\lambda}_n(T) \coloneqq \frac{T^2}{b^2T^2+n^2\pi^2},
\end{equation}
thus this case provides solutions to \eqref{eq:92} that satisfy conditions \eqref{eq:99}, \eqref{eq:97} and have the general form
\begin{equation}\label{eq:36}
\warunk{f}_{n,T}(t)=c_n\sin\bigg(\frac{n\pi t}{T}\bigg), \quad \warunk{\lambda}_n(T)=\frac{T^2}{b^2T^2+n^2\pi^2}. 
\end{equation}
Now, in order to prove that functions $\warunk{f}_{n,T}$ and corresponding $\warunk{\lambda}_n(T)$ are respectively the eigenfunctions and eigenvalues satisfying \eqref{eq:96}, for each $n\in\mathbb{N}$ we define function $\warunk{g}_n$ such that
\begin{equation}
\warunk{g}_n(t)=\big(\warunk{K}\ast \warunk{f}_{n,T}\big)(t)-\warunk{\lambda}_n(T) \warunk{f}_{n,T}
\end{equation}
and prove that it is identically equal zero. From the properties of $\warunk{f}_{n,T}$ and $\warunk{\lambda}_n(\tau)$, we can easily deduce that
\begin{equation}\label{eq:37}
\begin{split}
& \warunk{g}''_n(s)= \big(\warunk{K}\ast \warunk{f}_{n,T}\big)''(s)-\warunk{\lambda}_n(T)\warunk{f}''_{n,T}(s) \\ & =
(-\warunk{f}_{n,T}(s)+b^2\big(\warunk{K}\ast \warunk{f}_{n,T}\big)(s))-
(-\warunk{f}_{n,T}(s)+b^2\warunk{\lambda}_n(T) \warunk{f}_{n,T}(s))=b^2\warunk{g}_n(s)
\end{split}
\end{equation}
and because function $\warunk{g}_n$ is a real function, then the solution to \eqref{eq:37} is of the form
\begin{equation}
\warunk{g}_n(s)=d_1\exp(bt)+d_2\exp(-bt), \quad d_1,d_2 \in \Real.
\end{equation}
In addition we have
\begin{equation}\label{eq:75}
\warunk{g}_n(0)=\big(\warunk{K}\ast \warunk{f}_{n,T}\big)(0)-\warunk{\lambda}_n(T) \warunk{f}_{n,T}(0)=0
\end{equation}
and
\begin{equation}\label{eq:72}
\warunk{g}_n(T)=\big(\warunk{K}\ast \warunk{f}_{n,T}\big)(T)-\warunk{\lambda}_n(T) \warunk{f}_{n,T}(T)=0.
\end{equation}
Equation \eqref{eq:75} implies that $d_2=-d_1$, hence from \eqref{eq:72} we have
\begin{equation}
d_1(\exp(bT)-\exp(-bT))=0,
\end{equation}
which holds if and only if $d_1=0$ (since $b>0$). Thus $d_1=d_2=0$ and functions $\warunk{g}_n$ are identically equal zero, as desired.\\ 
Finally, to obtain $f_{n,\tau}$ normalized (as required in the Karhunen-Lo\`{e}ve
expansion) we find the appropriate $c_n$ by the direct calculation
\begin{equation}
1=\norm{\warunk{f}_{n,T}}^2=\int_0^Tc_n^2\sin^2\bigg(\frac{n\pi t}{T}\bigg)\,dt=\frac{c_n^2T}{2},
\end{equation}
hence
\begin{equation}\label{eq:89}
c_n = \sqrt{\frac{2}{T}}.
\end{equation}
Substituting into \eqref{eq:36} values $c_n$ and $\warunk{\lambda}_n(T)$ from \eqref{eq:89} and \eqref{eq:81} respectively, we finally get the assertion.
\end{proof}\qed

\bibliographystyle{amsplain}
\bibliography{xbib}

\end{document}